\documentclass[envcountsame,oribibl,orivec]{llncs}
\usepackage{amsmath,amssymb}
\usepackage{graphicx,wrapfig,float,url,booktabs}
\usepackage[usenames]{xcolor}
\usepackage[pdfborder=0 0 0]{hyperref}
\usepackage[english]{babel}
\usepackage{tabularx,xspace}
\usepackage{algo}
\usepackage{caption}
\usepackage[caption=false]{subfig}
\usepackage{cite}
\usepackage{appendix}
\usepackage{todonotes}
\graphicspath{{images/}}
\usepackage[margin=2.5cm]{geometry}

\renewcommand{\paragraph}[1]{\smallskip\noindent{\bfseries #1.}}


\title{
Distribution-Sensitive Construction of the Greedy Spanner
}

\author{Sander P. A. Alewijnse \and
  Quirijn W. Bouts\thanks{Q. W. Bouts is supported by the Netherlands Organisation for Scientific Research (NWO) under project no.~639.023.208} \and
        Alex P. ten Brink \and Kevin Buchin
}

\institute{
Eindhoven University of Technology, The Netherlands,\\
\path|q.w.bouts@tue.nl|
}


\expandafter\ifx\csname pdfpagebox\endcsname\relax\else
\pdfpagebox5
\fi

%
\spnewtheorem{observation}[theorem]{Observation}{\bfseries}{\itshape}
\spnewtheorem{fact}[theorem]{Fact}{\bfseries}{\itshape}

{\unskip\nobreak\hskip 1em plus 1fil\nobreak$\Box$
\parfillskip=0pt%
\endtrivlist}
{\unskip\nobreak\hskip 2em plus 1fil\nobreak$\Box$
\parfillskip=0pt%
\endtrivlist}

\let\doendproof\endproof
\renewcommand\endproof{~\hfill$\qed$\doendproof}

\setcounter{topnumber}{10}
\setcounter{bottomnumber}{10}
\setcounter{totalnumber}{10}
\setcounter{dbltopnumber}{10}

\newcommand{\safespace}{}

\newcommand{\Reals}{\mathbb{R}}

\begin{document}
\mainmatter
\maketitle
\safespace
\begin{abstract}
The greedy spanner is the highest quality geometric spanner (in e.g. edge count and weight, both in theory and practice) known to be computable in polynomial time. Unfortunately, all known algorithms for computing it on $n$ points take $\Omega(n^2)$ time, limiting its use on large data sets.

We observe that for many point sets, the greedy spanner has many `short' edges that can be determined locally and usually quickly, and few or no `long' edges that can usually be determined quickly using local information and the well-separated pair decomposition. We give experimental results showing large to massive performance increases over the state-of-the-art on nearly all tests and real-life data sets. On the theoretical side we prove a near-linear expected time bound on uniform point sets and a near-quadratic worst-case bound.

Our bound for point sets drawn uniformly and independently at random in a square follows from a local characterization of $t$-spanners we give on such point sets: we give a geometric property that holds with high probability on such point sets. This property implies that if an edge set on these points has $t$-paths between pairs of points `close' to each other, then it has $t$-paths between all pairs of points.

This characterization gives a $O(n \log^2 n \log^2 \log n)$ expected time bound on our greedy spanner algorithm, making it the first subquadratic time algorithm for this problem on any interesting class of points. We also use this characterization to give a $O((n + |E|) \log^2 n \log \log n)$ expected time algorithm on uniformly distributed points that determines if $E$ is a $t$-spanner, making it the first subquadratic time algorithm for this problem that does not make assumptions on $E$.
\end{abstract}

\vspace{-1\baselineskip}
\section{Introduction}
\safespace

A \emph{Euclidean graph} on a set of $n$ points in the Euclidean plane is a weighted graph with geometric distances as edge weights. If a shortest route in the graph is at most $t$ times longer than the direct geometric distance between its endpoints, we say these endpoints \emph{have a $t$-path}: a Euclidean graph is a $t$-spanner if all pairs of points have $t$-paths. For any $t>1$, we can efficiently find a $t$-spanner with $O\left(\frac{n}{t-1}\right)$ edges in the Euclidean plane \cite{Narasimhan:2007:GSN:1208237}. These `approximations' have few edges compared to the complete graph, while approximately maintaining distances, making them a useful tool in many areas. 

Bounded degree spanners are used in wireless network design~\cite{GaoGHZZ05}, where for example points of high degree tend to have problems with interference. By using such a bounded degree spanner the problem of interference is minimized while the connectivity is maintained. A considerable amount of research has been done on spanners~\cite{DilationandDetours,Narasimhan:2007:GSN:1208237} since they were introduced in network design~\cite{JGT:JGT3190130114} and in geometry~\cite{Chew1989}. Spanners have been used as components in various geometric and distributed algorithms.

Many different construction methods exist for $t$-spanners, where $t$ can be parameterized to an arbitrary value greater than 1, each having different advantages and disadvantages. An in-depth treatise of these spanners can be found in the book~\cite{Narasimhan:2007:GSN:1208237}. We focus on the greedy spanner, which is defined as the graph resulting from repeatedly adding the edge between the closest pair of points which do not have a $t$-path yet. The result is a very sparse graph with assymptotically optimal edge count, degree and weight. On uniform point sets and for $t=2$, one of its closest well-known competitors with respect to these three properties is the $\Theta$-graph. It has about ten times as many edges, twenty times higher total weight and six times higher maximum degree. Figure~\ref{figure:greedy} clearly shows the contrast between these two spanners. Unfortunately, all known algorithms computing the greedy spanner use $\Omega(n^2)$ time\cite{BoseCFMS2010,
AlewijnseBBB13}, making the spanner impractical to compute.

\begin{wrapfigure}[16]{r}{0.55\textwidth} %
\raggedleft
\begin{minipage}{0.53\textwidth}
 \includegraphics[width=0.495\textwidth]{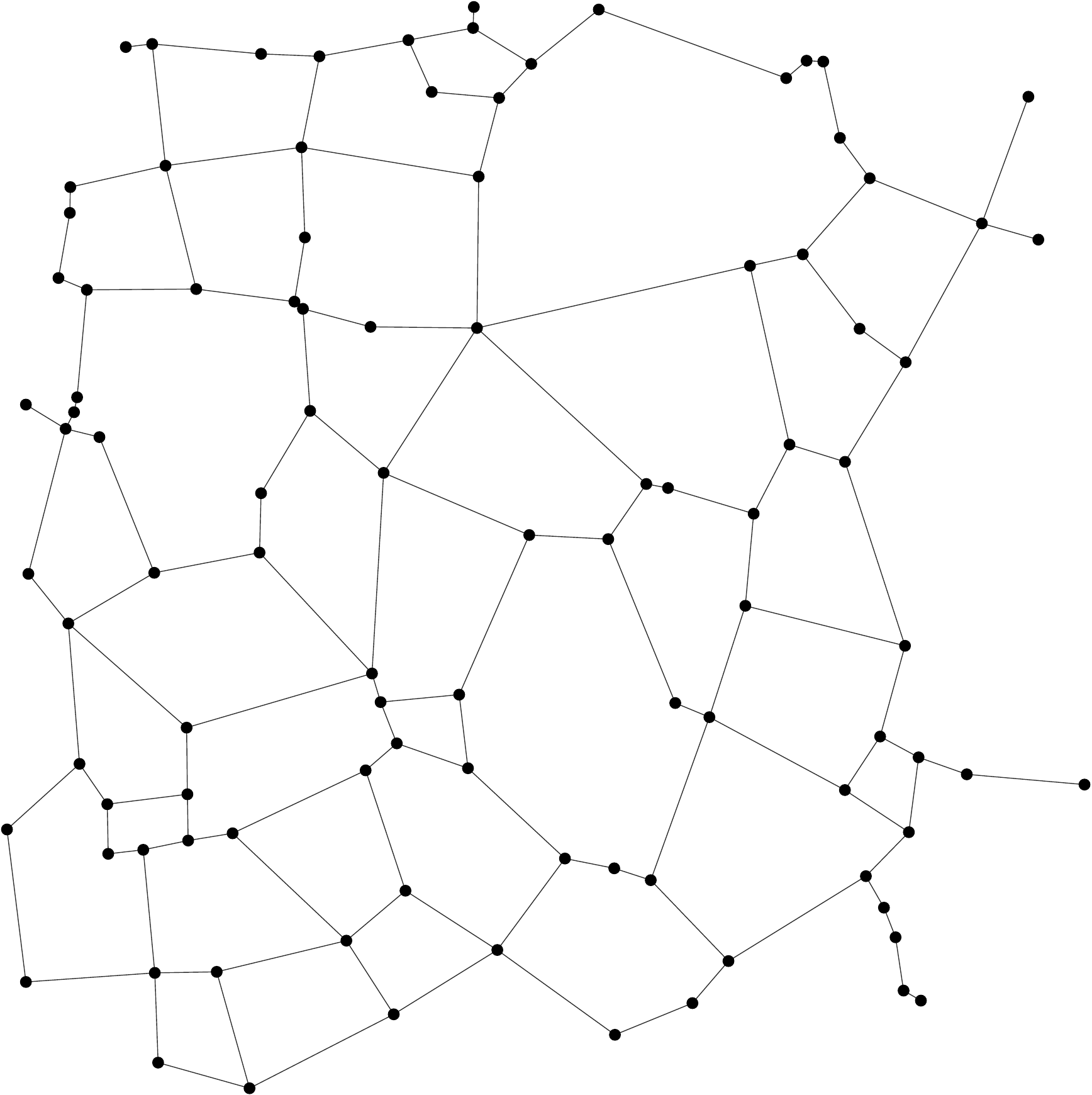}
 \includegraphics[width=0.495\textwidth]{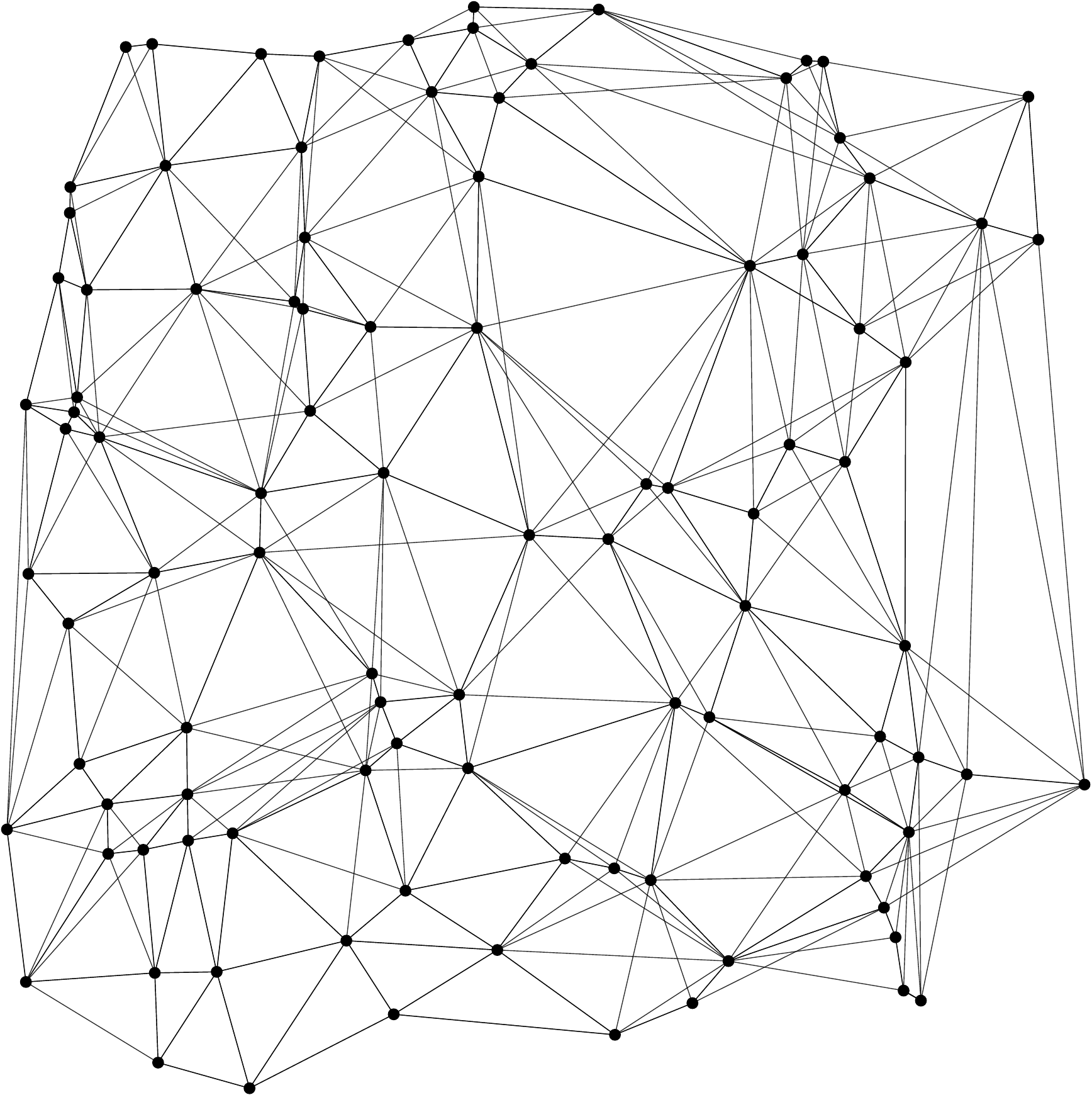}
\end{minipage}
\vspace{-.5\baselineskip}
 \caption{The left rendering shows the greedy spanner on 100 points distributed uniformly in a square with $t=2$. The right rendering shows the $\Theta$-graph on the same points with $k=6$ for which it was recently proven it achieves a dilation of 2.}
 \vspace{-.5em}
 \label{figure:greedy}
\end{wrapfigure}

We observed that on real-world examples, the greedy spanner contains mostly short edges with at most a few longer edges. Whether an edge is placed depends only on the points and edges in an ellipse with its endpoints as foci and with eccentricity $1/t$, which is a small area for short potential edges, hopefully containing few points. We can therefore find these short edges using a bucketing scheme, giving a speedup on such point sets.

For the `long' edges, we consider the `long' well-separated pairs from a Well-separated pair decomposition (WSPD)~\cite{Callahan95dealingwith}. We first compute information from the `short' edges, attempting to find witnesses that show that certain `long' well-separated pairs will not contain greedy spanner edges. This information is represented by \emph{path-hyperbola}. We then perform a standard algorithm~\cite{AlewijnseBBB13} on the (hopefully only few) well-separated pairs for which we cannot find such a witness.

We present experimental results showing that the above algorithm works very well on many data sets, ranging from real-world data sets to sets which are generated according to different distributions. Speedups vary from an (apparently) linear factor to a constant factor. In particular, on a uniformly distributed point set with 300,000 points, our new algorithm needs 19 minutes to compute the greedy spanner for $t=2$, while the only other algorithm that can handle point sets of this size \cite{AlewijnseBBB13} (other algorithms need quadratic space, which is prohibitive) needs 17 hours on the same set.

We show that our algorithm has a near-quadratic worst-case time bound. We give formal evidence for the algorithm's good behavior observed in experiments on realistic point sets (which are often reasonably spread out) by analyzing its performance on point sets distributed uniformly and independently at random in a square (or `uniformly distributed points' for short).

Euclidean graphs are frequently analyzed on uniformly distributed points, both concerning theoretical properties and experimental evaluation of structures and algorithms. One can find examples in computational geometry~\cite{buchin2009constructing,mucke1996fast}, combinatorial optimization~\cite{steele1997probability,yukich1998probability} and the analysis of ad-hoc networks~\cite{santi2005topology,Xue:2004}.

Various spanner constructions have been analyzed on uniformly distributed point sets~\cite{abam2009region,Bose:2006,devroye2009expected,Shpungin:2010,wang2007efficient}. Some of these constructions are a $t$-spanner for fixed $t$, others are parameterizable with arbitrary $t > 1$. Relatively sharp bounds have been obtained on various qualities of these spanners. This gives insight into the behavior of these constructions in situations arguably closer to realistic point sets than worst case situations.

The spanner constructions studied in these analyses have a `local' characterization: for example, Gabriel graphs connect $u, v$ if the circle having $uv$ as its diameter contains no points other than $u$ and $v$. For graphs with such a local characterization there are well-developed techniques to analyze them on uniformly distributed points~\cite{geom-150}. In this paper, however, we look at the `global' property $t$-spannerness and the greedy spanner, a graph for which the existence of an edge may depend on all other points. Previous analysis techniques do not directly apply on such properties. However, one of our main contributions is to show that with high probability, greedy spanners do admit a local characterization on uniform point sets.

We give two more examples of local analysis. For a pair of points the minimum $t$ such that there is a $t$-path between them is called their dilation. In a $t$-spanner for all pairs of points the dilation is bounded by $t$. For graphs on points drawn from a Poisson point process also the average dilation between pairs of points has been studied. 
Many graphs with a local characterization like the Gabriel graph have low average dilation~\cite{aldous2009connected}. The property of having low average dilation can be linked to percolation~\cite{Aldous:Shun:2010:1}.

We consider points distributed uniformly and independently at random in a $\sqrt{n} \times \sqrt{n}$ square. We use 
 this square so that if 
  we have an area $A$, then $O(A)$ points lie in it in expectation. We only consider the case of the Euclidean plane -- our results may generalize to higher dimensions, but we did not explore this. In this introduction, when stating bounds, we assume $t$ is a constant.

We prove that such point sets are, with high probability, configured in such a way that for any edge set $E$, if there are $t$-paths between points at most $O(\log n)$ away from each other, then there are $t$-paths between all points. In particular, we show that we can construct a `witness' of this configuration in $O(n \log^2 n \log \log n)$ expected time if it exists, thus allowing our algorithms to always give the correct answer.

This result easily implies that with high probability the greedy spanner has no long edges (longer than $O(\log n)$) and furthermore that the `proof' phase of our algorithm will find the witnesses for this if it exists. As the grid strategy works well on uniformly distributed point sets, we obtain a $O(n \log^2 n \log^2 \log n)$ expected time bound on our algorithm. To the best of our knowledge, this algorithm is the first subquadratic algorithm to compute the greedy spanner on any interesting class of point sets.

Another application of our result is a method to test whether a Euclidean graph $G=(P, E)$ is a $t$-spanner on uniformly distributed points in $O((n + |E|) \log^2 n \log \log n)$ expected time. 
Various algorithms are known for specific graphs on arbitrary points, but not for arbitrary graphs on specific sets of points.
Hellweg et al.~\cite{hss-tes-11} give a Monte Carlo algorithm for bounded degree graphs that distinguishes between being a $t$-spanner and being far away from a spanner. 
For specific graph classes the minimum $t$ can be computed~\cite{Agarwal:2008:CDS:1349672.1349682,eppstein2007minimum}, and for general graphs this $t$ can be approximated~\cite{Narasimhan:2000:ASF:586846.586967}.

The rest of the paper is organized as follows. In Section~\ref{section:property} we introduce \emph{bridgedness} and give a geometric lemma that will help us obtain our results. In Section~\ref{section:uniformpoints} we show uniform point sets are locally-$O(\log n)$-bridged with high probability. In Section~\ref{section:consequences} we give several fast algorithms that use this result. Finally, in Section~\ref{section:results} we present experimental results for our algorithm that computes the greedy spanner.

\section{Bridging Points} \label{section:property}

In this section we will introduce the concept of $\lambda$-bridgedness for point sets. We will later use this concept in our characterization of $t$-spanners on uniformly distributed point sets. We prove two geometric lemmas that will help us with the result of Section~\ref{section:uniformpoints}.

Let $P$ be a finite set of points in $\Reals^2$, let $n=|P|$, and let $t \in \Reals$ be the intended dilation ($t > 1$). Let $G = (P, E)$ be a graph on $P$ whose edges are weighted with the Euclidean distance between its endpoints. For two points $u, v \in P$, we denote the Euclidean distance between $u$ and $v$ by $|uv|$, and the network distance in $G$ by $\delta_G(u, v)$ (or just 
$\delta(u, v)$ if $G$ is clear from the context). We say a pair of points $(u, v)$ \emph{has a $t$-path} if $\delta(u, v) \leq t \cdot |uv|$. If all pairs of points have a $t$-path, the graph is called a \emph{$t$-spanner}.

Let $a, b, p, q \in P$ be pairwise different points. We say that the pair $(p, q)$ \emph{bridges} the pair $(a, b)$ if $t \cdot |ap| + |pq| + t \cdot |qb| \leq t \cdot |ab|$. Bridging points guarantee a $t$-path for $(a, b)$ if $(p, q)$ is an edge and the pairs $(a, p)$ and $(q, b)$ already have $t$-paths. Note that $|ap|, |qb| < |ab|$ as a consequence.

We say that $(p, q)$ is \emph{mandatory} if the ellipse with foci $p$ and $q$ and eccentricity $1/t$ including its border 
contains no points in $P$ other than $p$ and $q$. Any $t$-path between $p$ and $q$ must fully lie within this ellipse, so a mandatory $(p,q)$ will be in $E$ for any $t$-spanner.

Let $\lambda \in \Reals$. We say that a point $a \in P$ is \emph{$\lambda$-bridged} if for all $b \in P$ with $|ab| > \lambda$, there exist some mandatory pair of points $(p, q)$, $p, q \in P$, bridging $(a, b)$. We say that the point set $P$ is \emph{$\lambda$-bridged} if all points in $P$ are $\lambda$-bridged. We say a point $a \in P$ is \emph{locally-$\lambda$-bridged} if it is $\lambda$-bridged using only mandatory bridging pairs of points at with distance most $\lambda$ from $a$. A point set $P$ is \emph{locally-$\lambda$-bridged} if all points in $P$ are locally-$\lambda$-bridged. Lemma~\ref{lemma:bridgedness} shows the usefulness of this concept.
In Lemma~\ref{lemma:nicebox} we give a sufficient geometric condition for bridging pairs of points.

\begin{lemma}
\label{lemma:bridgedness}
Let $P$ be a set of points that is $\lambda$-bridged. For any Euclidean graph $G = (P, E)$ it holds that $G$ is a $t$-spanner if and only if all pairs of points $(a, b)$, $a, b \in P$, with $|ab| \leq \lambda$ have a $t$-path in $G$.
\end{lemma}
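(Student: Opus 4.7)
The $(\Rightarrow)$ direction is immediate from the definition of $t$-spanner, so the substance lies in the reverse direction. I would prove it by strong induction on $|ab|$, showing that every pair $(a,b) \in P \times P$ has a $t$-path in $G$. The base of the induction handles all pairs with $|ab| \le \lambda$, which have $t$-paths by the assumed hypothesis.

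For the inductive step, suppose $|ab| > \lambda$ and that every pair at strictly smaller Euclidean distance has a $t$-path. By $\lambda$-bridgedness there is a mandatory pair $(p,q)$ with $t|ap| + |pq| + t|qb| \le t|ab|$ and $|ap|,|qb| < |ab|$. Applying the induction hypothesis to $(a,p)$ and $(q,b)$, and assuming we can argue $pq \in E$, the concatenation of the two inductively obtained subpaths with the edge $pq$ yields a walk from $a$ to $b$ of length at most $t|ap| + |pq| + t|qb| \le t|ab|$, which is exactly a $t$-path for $(a,b)$ and closes the induction.

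The main obstacle is justifying $pq \in E$, and my plan is to apply the induction hypothesis to $(p,q)$ itself: by mandatoriness, the ellipse with foci $p,q$ and eccentricity $1/t$ contains no point of $P$ other than $p,q$, so the only possible $t$-path between $p$ and $q$ is the direct edge $pq$, and the mere existence of such a path forces $pq \in E$. Invoking the induction hypothesis on $(p,q)$ therefore reduces to proving the purely geometric inequality $|pq| < |ab|$. Since $a$ is distinct from $p,q$, it lies strictly outside the closed mandatory ellipse, giving $|ap| + |aq| > t|pq|$; combined with the triangle inequality $|aq| \le |ab| + |qb|$ this yields $|ap| + |qb| + |ab| > t|pq|$. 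Rearranging the bridging inequality gives $|ap| + |qb| \le |ab| - |pq|/t$, and substituting produces $(t + 1/t)|pq| < 2|ab|$, that is, $|pq| < 2t|ab|/(t^2+1)$. Because $(t-1)^2 > 0$ for $t>1$, we have $2t < t^2 + 1$, so $|pq| < |ab|$ strictly, which is exactly what the induction requires.
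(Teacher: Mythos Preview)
Your proof is correct and follows the same inductive scheme as the paper, which simply says the lemma ``follows by induction over all pairs of points $(a,b)$ with $|ab|$ ascending and earlier observations.'' In fact your write-up is more careful than the paper's one-liner: the paper's ``earlier observation'' that a mandatory pair lies in $E$ for any $t$-spanner is, taken literally, circular here, and the honest way to close the induction is exactly your step of showing $|pq| < |ab|$ so that the inductive hypothesis applies to $(p,q)$ and forces $pq \in E$. Your derivation of $|pq| < \frac{2t}{t^2+1}\,|ab| < |ab|$ from the mandatory-ellipse exclusion of $a$ together with the bridging inequality is clean and correct.
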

\begin{proof}
Follows by induction over all pairs of points $(a, b)$ with $|ab|$ ascending and earlier observations.
\end{proof}

We now develop a sufficient geometric condition for bridging pairs of points.

\begin{figure}[!h]
 \centering
 \vspace{-.5\baselineskip}
 \includegraphics{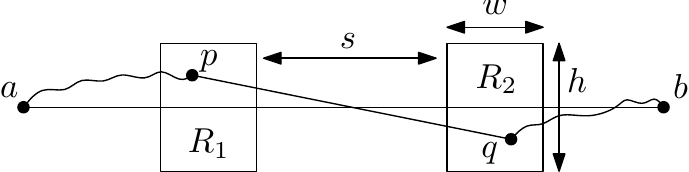}
 \caption{$(p, q)$ bridges $(a, b)$}
 \label{fig:twopoints}
 \vspace{-1.5\baselineskip}
\end{figure}

\begin{lemma}
\label{lemma:nicebox}
Suppose we are given points $a, b \in P$, rectangles $R_1$ and $R_2$ and $t > 1$, such that (as per Fig.~\ref{fig:twopoints}): $R_1$ and $R_2$ lie in between $a$ and $b$, have a side parallel to $ab$, have their centers on line segment $ab$, both have width $w$ and height $h$, are separated by $s \geq \frac{t+1}{t-1}h$ and $R_1$ lies closer to $a$ than $R_2$.

Then, for any $p, q \in P$ with $p$ lying in $R_1$ and $q$ lying
in $R_2$, $(p, q)$ bridges $(a, b)$.
\end{lemma}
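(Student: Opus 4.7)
My plan is to prove the inequality $t|ap|+|pq|+t|qb|\leq t|ab|$ by a direct coordinate computation, using only the elementary bound $\sqrt{X^{2}+Y^{2}}\le|X|+|Y|$ applied to each of the three distances. First I would set up coordinates with $a$ at the origin and $b=(L,0)$ where $L=|ab|$, so that the rectangles are axis-aligned. Writing $p=(p_{x},p_{y})$ and $q=(q_{x},q_{y})$, the hypotheses translate into
\[
0\le p_{x},\qquad q_{x}\le L,\qquad |p_{y}|,|q_{y}|\le h/2,\qquad q_{x}-p_{x}\ge s,
\]
where the first two inequalities come from the fact that $R_{1},R_{2}$ lie between $a$ and $b$, and the last from the horizontal separation.

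Next I would bound each of the three distances. From $\sqrt{X^{2}+Y^{2}}\le|X|+|Y|$:
\[
|ap|\le p_{x}+|p_{y}|\le p_{x}+h/2,\qquad
|qb|\le (L-q_{x})+|q_{y}|\le (L-q_{x})+h/2,
\]
and $|pq|\le(q_{x}-p_{x})+|q_{y}-p_{y}|\le(q_{x}-p_{x})+h$ since the heights are both bounded by $h/2$ in absolute value.

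Then I would combine these:
\[
t|ap|+|pq|+t|qb|\le t(p_{x}+L-q_{x})+(q_{x}-p_{x})+th+h
= tL-(t-1)(q_{x}-p_{x})+(t+1)h.
\]
The conclusion $t|ap|+|pq|+t|qb|\le tL=t|ab|$ is therefore equivalent to $(t-1)(q_{x}-p_{x})\ge(t+1)h$, which is exactly what the hypothesis $s\ge\frac{t+1}{t-1}h$ delivers via $q_{x}-p_{x}\ge s$.

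The only real subtlety — and thus the main (mild) obstacle — is confirming that the crude $\ell_{1}$-style triangle-inequality bound is tight enough to match the stated constant $\frac{t+1}{t-1}$. Since each of the three $|p_{y}|,|q_{y}|,|q_{y}-p_{y}|$ contributions enters only linearly and the weights $t,1,t$ combine to give precisely the coefficient $t+1$ on $h$, nothing is wasted and the separation bound in the hypothesis is exactly what is needed. No lower bounds on $p_{x}$ or $L-q_{x}$ are required, which is why I prefer the $|X|+|Y|$ bound over a more refined Taylor expansion (the latter would blow up when $p$ lies near $a$ or $q$ lies near $b$).
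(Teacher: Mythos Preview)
Your proof is correct and is essentially the same as the paper's: both set up coordinates with $ab$ horizontal, bound each of $|ap|$, $|pq|$, $|qb|$ by its horizontal part plus the vertical offset (at most $h/2$, $h$, $h/2$ respectively), and observe that the resulting inequality $(t-1)(q_x-p_x)\ge (t+1)h$ is exactly the separation hypothesis. The paper is somewhat terser in collapsing the algebra, but the argument is identical.
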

\begin{proof}
To simplify the proof, we assume without loss of generality that $ab$ lies on the $x$-axis. For any $u, v \in P$, we denote the difference in $x$-coordinates of $u$ and $v$ as $d_x(u, v)$. We have $d_x(p,q) \geq s \geq \frac{t+1}{t-1}h$, so $h \leq \frac{t-1}{t+1}d_x(p,q)$, which leads to the lemma using the triangle inequality as follows:
\begin{align*}
	&t |ap|+|pq|+t |qb| \leq \\&t \left(d_x(a,p)+ \frac{1}{2}h\right) +( d_x(p,q) + h )+t \left( d_x(q,b)+ \frac{1}{2}h\right) \leq t d_x(a,b) = t|ab|
\end{align*}
\end{proof}

We now use Lemma~\ref{lemma:nicebox} to prove a stronger statement that we will use to prove the full version of Theorem~\ref{theorem:uniformbridged}. Let $a, p, q \in P$ be pairwise different points and let region $A \subseteq \Reals^2$ with $a, p, q \not \in A$. We say that the pair $(p, q)$ \emph{bridges} $(a, A)$ if for every point $b \in P$ with $b \in A$ we have that $(p, q)$ bridges $(a, b)$.

\begin{figure}
\includegraphics{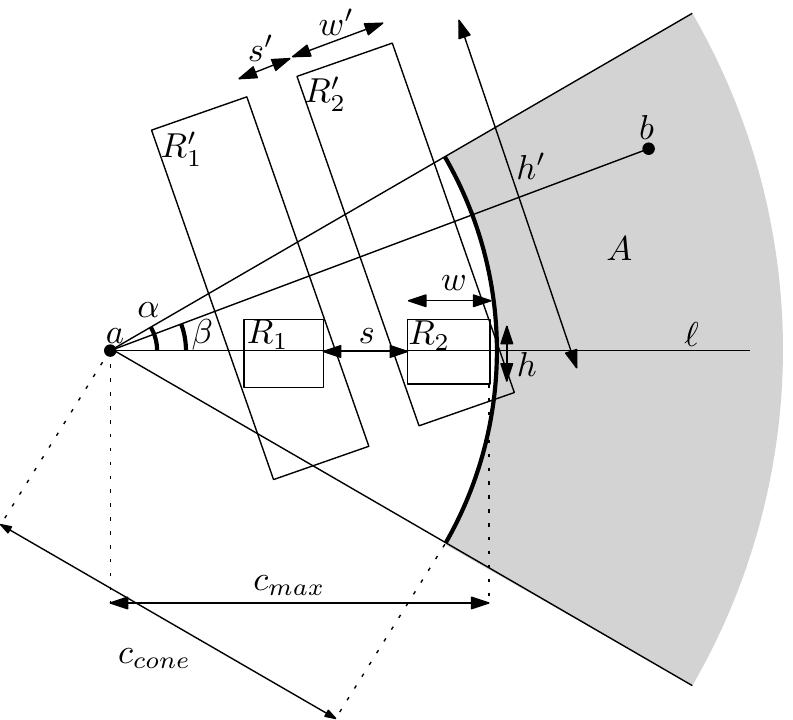}
\centering
\vspace{-.5\baselineskip}
 \caption{$R_1$ and $R_2$ are covered by $R_1'$ and $R_2'$, according to Lemma~\ref{lemma:nicebox}}
  \label{fig:boxinbox}
\vspace{-1em}
\end{figure}
\begin{lemma}
\label{lemma:conelemma}
Assume we are given $a \in P$, a line $\ell$ through $a$, an angle $\alpha \leq \pi/4$, a constant $c_{max}$, rectangles $R_1$ and $R_2$ and $t>1$, such that (as per Fig.~\ref{fig:boxinbox}): $R_1$ and $R_2$ have width $w$ and height $h$, are separated by $s$, have a side parallel to $\ell$, have their centers on $\ell$, $R_1$ lies between $a$ and $R_2$, $R_2$ lies at most $c_{max}$ away from $a$, $R_1$ lies at least $h/2$ away from $a$ and $s \geq \sqrt{2}\frac{t+1}{t-1}\left(2\sin(\alpha)c_{max} + h\right) + h$.

For the cone with apex $a$, angle $2\alpha$ and bisector $\ell$, we define $A$ as the area that is at least $c_{cone}=c_{max} + h/2$ away from $a$. Then for any $p, q \in P$ with $p$ lying in $R_1$ and $q$ lying in $R_2$, $(p, q)$ bridges $(a, A)$.
\end{lemma}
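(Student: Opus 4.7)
The plan is to reduce Lemma~\ref{lemma:conelemma} to Lemma~\ref{lemma:nicebox}: for each fixed $b \in A$, I will construct slightly enlarged rectangles $R_1' \supseteq R_1$ and $R_2' \supseteq R_2$ that are aligned with the segment $ab$ (rather than with $\ell$) and satisfy the hypotheses of Lemma~\ref{lemma:nicebox} for the pair $(a,b)$. Since any $p \in R_1$ also lies in $R_1'$ and any $q \in R_2$ lies in $R_2'$, that lemma will then give that $(p,q)$ bridges $(a,b)$. As $b \in A$ is arbitrary, $(p,q)$ bridges $(a,A)$.

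To set up the construction, place $a$ at the origin with $\ell$ along the $x$-axis, so $R_i = [x_i - w/2, x_i + w/2] \times [-h/2, h/2]$ with $x_1 < x_2$, $x_2 + w/2 \le c_{max}$, and $x_1 - w/2 \ge h/2$. Fix $b \in A$ and let $\beta \in [0,\alpha]$ be the (unsigned) angle between $ab$ and $\ell$; by the cone hypothesis $\beta \le \alpha \le \pi/4$. I define $R_i'$ as the smallest rectangle whose sides are parallel and perpendicular to $ab$, whose centre lies on line $ab$, and that contains $R_i$. Projecting the corners of $R_i$ onto the $ab$-direction shows $R_i'$ has width $w' = w\cos\beta + h\sin\beta$. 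For the height, the centre of $R_i$ sits at perpendicular distance $x_i \sin\beta$ from line $ab$ and $R_i$'s own perpendicular extent is $w\sin\beta + h\cos\beta$, so the required height is $(2x_i + w)\sin\beta + h\cos\beta$; taking the larger value ($i=2$) and using $x_2 + w/2 \le c_{max}$ gives a common height $h' \le 2c_{max}\sin\alpha + h$.

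Next I will check the separation and positioning conditions of Lemma~\ref{lemma:nicebox}. The distance between the centres of $R_1'$ and $R_2'$ along $ab$ is $(x_2 - x_1)\cos\beta$, so the gap between them is $s' = (x_2 - x_1)\cos\beta - w' = s\cos\beta - h\sin\beta$. Using $\cos\beta \ge \cos\alpha \ge 1/\sqrt{2}$ and $\tan\beta \le \tan\alpha \le 1$, the hypothesis $s \ge \sqrt{2}\tfrac{t+1}{t-1}(2c_{max}\sin\alpha + h) + h$ rearranges to $s\cos\beta - h\sin\beta \ge \tfrac{t+1}{t-1}(2c_{max}\sin\alpha + h) \ge \tfrac{t+1}{t-1}h'$, exactly the separation condition required. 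For the positioning, the left edge of $R_1'$ in $ab$-coordinates is $(x_1 - w/2)\cos\beta - (h/2)\sin\beta \ge (h/2)(\cos\beta - \sin\beta) \ge 0$ (using $x_1 - w/2 \ge h/2$ and $\beta \le \pi/4$), while the right edge of $R_2'$ is at most $(x_2 + w/2)\cos\beta + (h/2)\sin\beta \le c_{max} + h/2 \le |ab|$ because $b \in A$. With every hypothesis of Lemma~\ref{lemma:nicebox} verified, the conclusion follows.

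The main difficulty I expect is the rotation bookkeeping: the perpendicular shift $x_i \sin\beta$ of the centre of $R_i$ from line $ab$ grows linearly in $x_i$ and can be nearly $c_{max}\sin\alpha$, which is precisely what forces the $2c_{max}\sin\alpha$ term in the hypothesis on $s$. Checking that the somewhat unusual constants — a $\sqrt{2}$ multiplier on one term and an isolated additive $h$ outside — line up tightly requires tracing through $1/\cos\alpha \le \sqrt{2}$ and $h\tan\alpha \le h$ at the extremal angle $\beta = \alpha = \pi/4$, where the bound is tight.
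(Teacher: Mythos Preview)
Your proof is correct and follows essentially the same approach as the paper: fix $b \in A$, enclose each $R_i$ in a rectangle $R_i'$ aligned with $ab$ and centred on line $ab$, then verify the separation and positioning hypotheses of Lemma~\ref{lemma:nicebox} via the same key estimates $s' = s\cos\beta - h\sin\beta$ and $h' \le 2c_{max}\sin\beta + h$. Your write-up is in fact more explicit than the paper's on a couple of points (the exact computation of $w'$ and $h'$, and the verification that $R_1'$ and $R_2'$ lie between $a$ and $b$).
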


\begin{proof}
Let $b \in A$ and $b \in P$. We will prove that the rectangles $R_1$ and $R_2$ can be covered by rectangles $R_1'$ and $R_2'$ respectively, that meet all requirements of Lemma~ \ref{lemma:nicebox}, which therefore implies that the pair $(p,q)$ bridges $(a,b)$. The lemma then follows.

The rectangles $R_1'$ and $R_2'$ are chosen such that their centers lie on line segment $ab$, they lie in between $a$ and $b$ (this is where $c_{cone}=c_{max} + h/2$ is needed) and have at least one side parallel to $ab$. The rectangles are chosen to have equal width (= length of the size parallel to $ab$) $w'$ and height $h'$. Their position, height and width are chosen as the minimal values such that $R_1'$ contains $R_1$ and $R_2'$ contains $R_2$ (while maintaining the previous properties), as depicted in Fig.~\ref{fig:boxinbox}. Let $s'$ be the separation between $R_1'$ and $R_2'$ and let $\beta$ be the angle between $l$ and $ab$. Using basic geometry we can derive that:
\begin{equation*}
s' = \cos(\beta)s-h\sin(\beta).
\end{equation*}
The angle $\beta \leq \alpha$ is bounded by $\pi/4$. This implies that $\cos(\beta) \geq \frac{\sqrt{2}}{2}$ and $\sin(\beta) \leq \frac{\sqrt{2}}{2}$. We obtain the following lower bound on $s'$:
\begin{equation*}
s' \geq \frac{\sqrt{2}}{2}(s-h).
\end{equation*}
Substituting the lower bound assumed for the lemma and using that $\sin \alpha \geq \sin \beta$ we have:
\begin{equation*}
\label{eqn:sbound}
s' \geq \frac{t+1}{t-1}\left(2 \sin(\beta)c_{max}+ h\right).
\end{equation*}
We bound $h'/2$ by the distance from the center of the right side of $R_2$ to $ab$ plus the distance from this center to the corner of $R_2$:
\begin{equation*}
\label{eqn:hbound}
h' / 2 \leq \sin(\beta) c_{max} + \frac{h}{2},
\end{equation*}
Combining the bounds on $h'$ and $s'$ gives
\begin{equation*}
s' \geq \frac{t+1}{t-1}h'.
\end{equation*}
This proves that all requirements of Lemma~\ref{lemma:nicebox} hold. Hence $(p, q)$ bridges $(a,b)$.
\end{proof}

\safespace
\section{Uniform Point Sets} \label{section:uniformpoints}
\safespace

\begin{theorem} \label{theorem:uniformbridged}
There exists $c_t$ dependent only on $t$ such that for every $c > 0$, if $P$ is a set of points uniformly and independently distributed at random in a $\sqrt{n} \times \sqrt{n}$ square and $n$ is large enough, then with probability at least $1 - n^{-c}$, $P$ is locally-$(c \cdot c_t \log n)$-bridged.
\end{theorem}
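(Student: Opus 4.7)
The plan is to apply Lemma~\ref{lemma:conelemma} around every $a \in P$ in enough directions to cover all $b$ with $|ab| > \lambda$, and to show that each individual application succeeds with high enough probability to support a union bound over all points and directions. Around each $a$ I partition the $2\pi$ directions into $K$ cones of half-angle $\alpha$. I fix $h,w,s$ as constants depending only on $t$ and pick $\alpha = \Theta(1/\log n)$ small enough that, with $c_{max} = \Theta(\log n)$, the separation requirement $s \geq \sqrt{2}\,\tfrac{t+1}{t-1}(2\sin(\alpha)\, c_{max} + h) + h$ is still met; this forces $K = \Theta(\log n)$ cones per point. Every $b \in P$ with $|ab| > c_{cone} = c_{max}+h/2 = \Theta(\log n)$ then lies in the far region $A$ of some cone, so producing one mandatory bridging pair per cone is enough to make $a$ locally-$\lambda$-bridged.

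Within each cone I would place $M = \Theta(c_{max}) = \Theta(\log n)$ pairwise disjoint \emph{trials} along its bisector $\ell$. A trial is a candidate rectangle pair $(R_1, R_2)$ of constant dimensions satisfying the premises of Lemma~\ref{lemma:conelemma}, together with the associated ellipse of eccentricity $1/t$ with foci in $R_1 \times R_2$ that governs the definition of mandatory; because $s$ is constant, each such ellipse has $O(1)$ area. A trial \emph{succeeds} when $R_1$ contains exactly one point of $P$, $R_2$ contains exactly one point of $P$, and the ellipse through those two foci contains no other point of $P$. The chosen pair is then mandatory and, by Lemma~\ref{lemma:conelemma}, bridges $(a, A)$. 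Since every relevant region has area $O(1)$ with constants depending only on $t$ and $P$ has density $1$ inside $[0,\sqrt n]^2$, each trial succeeds with probability at least some constant $p_t > 0$ depending only on $t$. I arrange trials along $\ell$ so that their rectangles and bridging ellipses are pairwise disjoint, which makes the trial events (nearly) independent under the uniform model.

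The failure probability of a single cone is then at most $(1-p_t)^M = n^{-\Omega(c\, c_t p_t)}$ once I set $c_{max} = c\cdot c_t \log n$. Taking a union bound over the $n$ choices of $a$ and the $K = \Theta(c\, c_t \log n)$ cones around each gives a total failure probability of at most $n^{1+o(1)}\cdot n^{-\Omega(c\, c_t p_t)} \leq n^{-c}$ for $n$ large enough, once $c_t$ is chosen sufficiently large (depending only on $t$, through $p_t$). I expect the main hurdles to be (i) juggling all the constants so that Lemma~\ref{lemma:conelemma}'s separation inequality and the $O(1)$-area estimates hold simultaneously; (ii) making the geometric supports of the $\Theta(\log n)$ trials per cone pairwise disjoint, so that the independence step in the union bound is clean; and (iii) handling points $a$ close to the boundary of $[0,\sqrt n]^2$, where cones may poke outside the square --- here the saving grace is that any relevant $b \in P$ lies inside the square, so the trial regions needed to bridge to $b$ can always be placed inside the square as well.
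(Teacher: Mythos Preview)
Your proposal is correct and follows essentially the same approach as the paper: cone decomposition around each point with half-angle $\Theta(1/\log n)$, $\Theta(\log n)$ constant-sized rectangle-pair trials along each bisector surrounded by constant-area ellipses, a constant success probability per trial, and a union bound over points and cones; you even flag the same three technical hurdles the paper works through. The only notable refinement in the paper is that disjointness of the trial regions does not quite give independence (the $n$ points are fixed, not a Poisson process), so it first uses a Chernoff bound to control the total number of points landing in the ellipses of a cone and then bounds the conditional success probability of each trial given the outcomes of the others---this is exactly your hurdle (ii), and the paper's treatment is the natural way to discharge it.
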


We first give a high level overview of the proof followed by the complete proof.
We need to prove that every point in $P$ is locally-$(c \cdot c_t \log n)$-bridged simultaneously with high probability. We show that every point individually
is locally-$(c \cdot c_t \log n)$-bridged with sufficiently high probability that a simple union bound shows that it will happen to all points simultaneously with high probability. We use Lemma~\ref{lemma:conelemma} to achieve this. For ease of presentation, we assume $t$ is constant.

The rectangles in Lemma~\ref{lemma:conelemma} can be chosen to have a roughly constant chance of containing a point, and if we can fulfill the other requirements, the resulting pair of points bridges a relatively large part of $\Reals^2$. In fact, we need only $\lceil \pi / \alpha \rceil$ cones (we will end up picking $\alpha = O(1/\log n))$ to cover the area we wish to cover, as depicted in Fig.~\ref{fig:circlecones}. We show the likely existence of a pair of mandatory points that bridges a single cone and use a union bound to show such pairs are likely to exist for all cones simultaneously.

\begin{wrapfigure}[15]{r}{0.35\textwidth} %
\raggedleft
\vspace{-1.5\baselineskip}
 \includegraphics{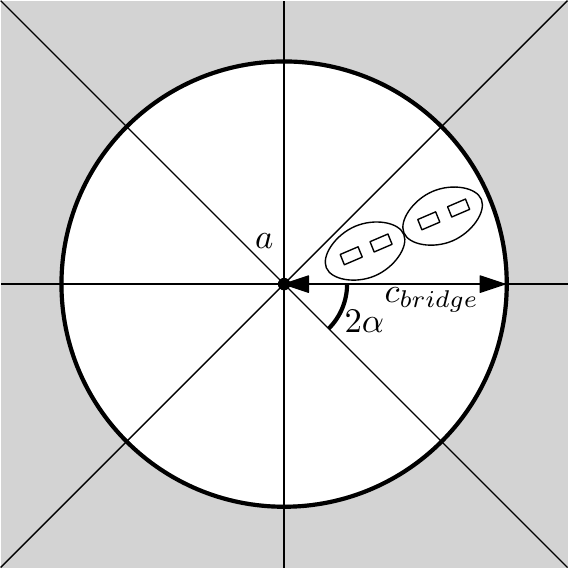}
\vspace{-1.5\baselineskip}
  \captionof{figure}{Covering the plane with cones}
  \label{fig:circlecones}
\vspace{-1em}
\end{wrapfigure}
We will place $O(\log n)$ pairs of rectangles in every cone as depicted in Fig.~\ref{fig:circlecones}. If any pair of boxes ends up containing a point per box, these two points will satisfy the requirements for Lemma~\ref{lemma:conelemma}. We just need this pair of points to be mandatory, and therefore consider an ellipse around such a pair of boxes (defined in terms of the boxes, not the points, for easy analysis), such that if this ellipse is empty apart from these two points, these points must be mandatory. Using a careful analysis, the chance that a pair of boxes contains one point per box and the ellipse contains no more points (an event we will call a `success') is at least some constant $p$ (dependent only on $t$). We need only one success per cone and the events are nearly independent (the ellipses do not overlap), so the chance that we get at least one success is at least (roughly) $1-p^{O(\log n)} = 1-n^{-O(f(t))}$, which then shows the theorem.

\pagebreak
We now give the full proof of Theorem~\ref{theorem:uniformbridged}.
\begin{proof}
Note that we will often introduce a constant (say, the height $h$ of $R_1$), give it a value (say $h=1$) but still refer to the name of the variable later for clarity (so $h$ instead of $1$).

\subsubsection{Positioning the Cones}

Let $c$ be given as per the theorem. Let $k=4 \exp\left(\frac{604 t^{7/2}}{(t-1)^{3/2}}\right)(c + 14)$. Let $c_{max} = 12 (k \log (n)+1) \frac{t^2}{t-1}$. Let $m = \left\lceil\pi / \arcsin\left(\frac{1}{2 c_{max}}\right)\right\rceil$. We partition the circle with radius $c_{cone} = c_{max} + 1/2$ around every point $a$ into $m \log n$ cones, as depicted in Fig.~\ref{fig:circlecones}. We want the area in every cone within the circle to fall entirely within the square. If $a$ lies near the edge of the square, this may not always be the case, so for these cones we either remove them or rotate them slightly around $a$ as follows.

We only aim to prove that $a$ is $c_{bridge} = \sqrt{2} \cdot c_{cone}$-bridged (and not $c_{cone}$-bridged), so we remove all cones whose area further than $c_{bridge}$ from $a$ lies outside the square in its entirety. For all other cones, if a point lies sufficiently far from a corner, it is easy to see we can just rotate the cone a bit so that the area closer than $c_{cone}$ from $a$ lies entirely within the square while the area that is further than $c_{bridge}$ from $a$ but still within the square is the same for the original and the rotated cone.

The only potential problem occurs when rotating a cone makes it end up outside the square if $a$ lies near a corner of the square. However, this means that the area of the cone further than $c_{bridge}$ away from $a$ but still within the square contains the corner of the square, but it is easily seen that this means that at least one of the edges of the square is more than $c_{cone}$ away from $a$, so this is never a problem. Note that rotated cones may overlap other cones, causing dependency issues that we will deal with later.

\begin{figure}
\centering
\vspace{-1.0em}
  \includegraphics{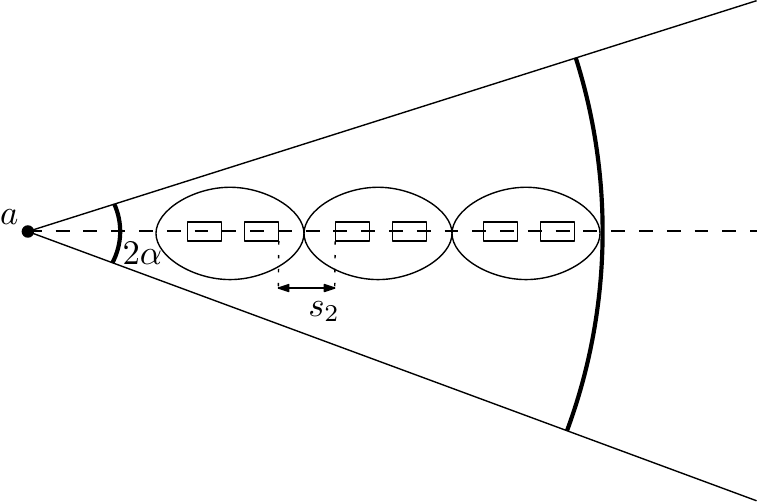}
  \captionof{figure}{Rectangle configuration in a cone}
  \label{fig:coneboxes}
\vspace{-3em}
\end{figure}

\subsubsection{Boxes in Cones}

We place $k \log n$ rectangles $R_1$ and $R_2$ in every cone as per Lemma~\ref{lemma:conelemma}, as depicted in Fig.~\ref{fig:coneboxes}. Every rectangle has width $w = 1$, height $h = 1$, and $R_1$ and $R_2$ are placed $s=h + 2\sqrt{2}\frac{t+1}{t-1}$ apart. The rectangles are aligned with the bisector $\ell$ of the cones. Neighboring pairs of rectangles are placed $s_2$ apart. Let $A = w \cdot h = 1$ be the area of $R_1$ and $R_2$.

We surround the rectangles by an ellipse $E$ with focii $d$ and $e$ and eccentricity $t$ as follows. The centers of $R_1$ and $R_2$ lie on $de$ and $d$ and $e$ are placed at a distance $X = \frac{h + 2w + t h}{2(t - 1)} = \frac{3+t}{2t-2}$ from $R_1$ and $R_2$ respectively. We now note that if $a \in R_1$ and $b \in R_2$, then any point $c$ lying in the ellipse with focii $a$ and $b$ and eccentricity $t$ also lies in $E$ as follows: from $|ac| + |cb| \leq t |ab| \leq t (h + 2w + s)$ we conclude $|dc| + |ce| \leq |da| + |ac| + |cb| + |be| \leq 2X + h + 2w + t (h + 2w + s) = t (2X + 2w + s) = t|de|$, and so $c \in E$. Properties of ellipses and algebraic simplification gives us
\begin{equation*}
|E| = \pi \frac{t \sqrt{t+1} (\sqrt{2} t+3 t+\sqrt{2})^2}{(t-1)^{3/2}} \leq \pi \frac{t \sqrt{t+t} (\sqrt{2} t+3 t+\sqrt{2}t)^2}{(t-1)^{3/2}} \leq \frac{151 t^{7/2}}{(t-1)^{3/2}}
\end{equation*}

If at least one point ends up in $R_1$, and at least one point ends up in $R_2$ and no other point ends up in $E$ (making the pair of points mandatory), then we say that this pair of rectangles is a \emph{success}. Let $\alpha = \arcsin\left(\frac{1}{2 c_{max}}\right)$, then the cones have angle $\leq 2\alpha$, which implies that $s \geq \sqrt{2}\frac{t+1}{t-1}\left(2 \sin(\alpha)c_{max} + h\right) + h$. The pair of points corresponding to a success would therefore fit the conditions of Lemma~\ref{lemma:conelemma} and would therefore bridge the cone we are considering. The Lemma requires that the angle of the cones is at most $\pi/4$, which follows from $m \geq 8$, which follows from $m \geq \frac{12 (k \log n+1) t^2}{t-1} - 1$, which follows from $\frac{t^2}{t-1}>1$ and $k \log n \geq 1$. We will show that we will have at least one success for every cone simultaneously with high probability.

We first consider the final condition that needs to be met: the first box must lie far enough away from the origin point so the ellipse around it lies entirely within the cone. The ellipse has minor axis $\frac{2 \sqrt{t+1} ((3+\sqrt{2}) t+\sqrt{2})}{\sqrt(t-1)} \leq 17 \frac{t^{3/2}}{\sqrt{t-1}}$ and the cone is therefore wide enough for this at $\sin(\alpha) \frac{17/2 t^{3/2}}{\sqrt{t-1}} = \frac{17}{4 c_{max}} \frac{t^{3/2}}{\sqrt{t-1}}$. The major axis of the ellipse is $\frac{2 t (\sqrt{2} t+3 t+\sqrt{2})}{t-1} \leq 12 \frac{t^2}{t-1}$, so to accommodate $k \log n$ ellipses, we need $c_{max} \geq \frac{17}{4 c_{max}} \frac{t^{3/2}}{\sqrt{t-1}} + 12 k \log (n) \frac{t^2}{t-1}$, which holds (after simplification).

\subsubsection{Probability of Success}

Let $p$ be the probability of success for a rectangle. Although the rectangles and ellipses do not overlap, the probability distributions for the rectangles are not independent, for if a pair of rectangles is not a success, then we learn something about the point sets: the points either avoid $R_1$, avoid $R_2$ or end up in $E$ too often or in $E \setminus (R_1 \cup R_2)$. We can therefore not immediately bound the chance that no pair of rectangles in a cone succeeds $p'$ by $(1-p)^{k \log n}$. If we keep the dependencies in mind, we can however get a bound that is almost as strong. The chance that a point ends up in an area $S$ may be higher than $\frac{S}{n}$, up to $\frac{S}{n - |E| k \log n}$, and the number of points we do not yet know the exact location of may be less than $n$.

We bound the chance $p_e$ that more than $k \log (n) (\sqrt{2 |E|} + |E|)$ points end up in the union of the ellipses (of a single cone): if we assume this happens, there are at least $n - k \log (n) (\sqrt{2 |E|} + |E|)) \geq n - 3 k |E| \log (n)$ points that we do not yet know the exact location of. We can bound $p_e$ by a binomial distribution with $p_C = k \log n \frac{|E|}{n}$, $n_C=n$ and $k_C = k \log (n) (\sqrt{2 |E|} + |E|)$ and a Chernoff bound: this gives us (after filling in and simplifying) $$p_e \leq \exp\left(-\frac{1}{2p_C}\frac{(n_Cp_C-k_C)^2}{n_C}\right) \leq n^{-k}$$

We will now bound the chance $p''$ that a pair of rectangles is a success assuming that no more than $3 k |E| \log (n)$ points end up in the union of the ellipses, and assuming that for any number of the other pairs of rectangles, we are given that they are either a success or not (thus allowing us to use the bound by $(1-p'')^{k \log n}$ later). For a success, we need two points to hit the rectangles (two factors $\frac{A}{n}$), no other points hit the rectangle (a factor $(1 - 2\frac{|E|}{n-3 k |E| \log (n)})^n$), with an additional factor $(n - 3 k |E| \log (n))^2$ because there are at least that many ways of picking the first two points.

\begin{align*}
p'' &\geq (n - 3 k \log (n) |E|)^2 \left(\frac{A}{n}\right)^2 \left(1 - 2\frac{|E|}{n-3 k \log (n) |E|}\right)^n\\
&\geq A^2 \left(1 - 3 k |E| \frac{\log n}{n}\right)^2 \exp\left(- 2\frac{|E|}{1-3 k |E| \log (n)/n}\right)\\
&= \left(1 - 3 k |E| \frac{\log n}{n}\right)^2 \exp\left(- 2\frac{|E|}{1-3 k |E| \log (n)/n}\right)\\
&:= b\\
\end{align*}

Note that $1 - 3 k |E| \frac{\log n}{n}$ goes to 0 as $n$ increases, so $b \approx \exp(-2|E|)$. Using that $b < 1$ we conclude that the chance $p'$ that no pair of rectangles in a cone succeeds assuming that no more than $3 k |E| \log (n)$ points end up in the union of the ellipses is at most
\begin{equation*}
(1-b)^{k \log n} = n^{k \log (1-b)} \leq n^{- k b} = n^{- k \left(1 - 3 k |E| \frac{\log n}{n}\right)^2 \exp\left(- 2 |E| / \left(1-3 k |E| \frac{\log n}{n}\right)\right)}
\end{equation*}

\subsubsection{Conclusion}

We now use a union bound to bound the chance that some cone either ends up without successes, or has too many points inside its ellipses. There are $m \log n$ cones per point and $n$ points, so this chance is at most

\begin{align*}
&m n \log (n) \left(n^{-k \log n} + n^{- k \left(1 - 3 k |E| \frac{\log n}{n}\right)^2 \exp\left(- 2 |E| / \left(1-3 k |E| \frac{\log n}{n}\right)\right)}\right)\\
&\leq \left\lceil\pi / \arcsin\left(\frac{1}{2 c_{max}}\right)\right\rceil n \log (n) \left(n^{-k \log n} + n^{- k \left(1 - 3 k |E| \frac{\log n}{n}\right)^2 \exp\left(- 2 |E| / \left(1-3 k |E| \frac{\log n}{n}\right)\right)}\right)\\
&\leq \left(1 + \frac{24 \pi (k \log (n)+1) t^2}{t-1}\right) n \log (n) \left(n^{-k \log n} + n^{- k \left(1 - 3 k |E| \frac{\log n}{n}\right)^2 \exp\left(- 2 |E| / \left(1-3 k |E| \frac{\log n}{n}\right)\right)}\right)\\
&\leq n^{1 + \frac{\log \log n}{\log n} + \frac{\log\left(1 + \frac{24 \pi (k \log (n)+1) t^2}{t-1}\right)}{\log n}} \left(n^{-k \log n} + n^{- k \left(1 - 3 k |E| \frac{\log n}{n}\right)^2 \exp\left(- 2 |E| / \left(1-3 k |E| \frac{\log n}{n}\right)\right)}\right)\\
\end{align*}

We wish for the above chance to become $\leq n^{-c}$. Noting that $n^a + n^b \leq 2 n^{\max(a, b)}$\\$= n^{\max\left(a + \frac{\log(2)}{\log(n)}, b + \frac{\log(2)}{\log(n)}\right)}$, we will bound both exponents in the above chance by $-c-\log(2)/\log(n)$. We assume that $n > 906 k \frac{t^{7/2}}{(t-1)^{3/2}}$, which makes $1-3 k |E| \frac{\log (n)}{n} > 1/2$. We will use $t < n$ and $k < n$ which follow from $n > 906 k \frac{t^{7/2}}{(t-1)^{3/2}}$, as well as $\log(n)-1 > \frac{\log(n)}{2}$ from $n>8$.

\begin{align*}
-k \log n + 1 + \frac{\log \log n}{\log n} + \frac{\log\left(1 + \frac{24 \pi (k \log (n)+1) t^2}{t-1}\right)}{\log n} + \frac{\log 2}{\log n} \leq -c\\
-k (\log (n) - 1) + 3 + \frac{\log \log n}{\log n} + \log (\log (n)) + \frac{\log\left(\frac{24 \pi t^2}{t-1}\right)}{\log n} + \frac{\log 2}{\log n} \leq -c\\
k \geq \frac{c + 3 + \frac{\log \log n}{\log n} + \log(\log(n))+\frac{\log\left(\frac{24 \pi t^2}{t-1}\right)}{\log n} + \frac{\log 2}{\log n}}{\log (n) - 1}\\
k \geq 2 \frac{c + 3 + 2\log(\log(n))+7\frac{\log t}{\log n}}{\log n}\\
k >= 2 \frac{c + 10 + 2\log(\log(n))}{\log n}\\
k >= \frac{2c + 20}{\log n} + 4\\
\end{align*}

This bound holds by our definition of $k$. We now turn to the other exponent.

\begin{align*}
- k \left(1 - 3 k |E| \frac{\log n}{n}\right)^2 \exp\left(\frac{- 2|E|}{1 - 3 k |E| \frac{\log n}{n}}\right) + 1\\+ \frac{\log \log n}{\log n} + \frac{\log\left(1 + \frac{24 \pi (k \log (n)+1) t^2}{t-1}\right)}{\log n} + \frac{\log 2}{\log n} &\leq -c\\
- \frac{k}{4} \exp(- 4 |E|) + 3 + 2\frac{\log \log n}{\log n}\\+ \frac{\log k}{\log n} + \frac{\log\left(\frac{24 \pi t^2}{t-1}\right)}{\log n} + \frac{\log 2}{\log n} &\leq -c\\
\frac{k}{4} \exp(- 4 |E|) - \frac{\log n}{\log n} &\geq c + 3 + 2\frac{\log \log n}{\log n} + \frac{\log\left(\frac{24 \pi t^2}{t-1}\right)}{\log n} + \frac{\log 2}{\log n}\\
k &\geq 4 \exp(4 |E|)(c + 4 + 2\frac{\log \log n}{\log n}\\&+ \frac{\log\left(\frac{24 \pi t^2}{t-1}\right)}{\log n} + \frac{\log 2}{\log n})\\
k &\geq 4 \exp\left(\frac{604 t^{7/2}}{(t-1)^{3/2}}\right)(c + 14)
\end{align*}

This bound also holds by our definition of $k$. We note that $k = O\left(c e^{\frac{t^{7/2}}{(t-1)^{3/2}}}\right)$ and so $c_{bridge} = O\left(c e^{\frac{t^{7/2}}{(t-1)^{3/2}}} \frac{t^2}{t-1} \log n\right)$ and the theorem follows. 
\end{proof}

\safespace
\section{Algorithms} \label{section:consequences}
\safespace

We first introduce three tools used in the results below. Let $c$ and $c_t$ be as in Theorem~\ref{theorem:uniformbridged} throughout this section. The first is that we can divide the input into a $\frac{\sqrt{n}}{c \cdot c_t \log n} \times \frac{\sqrt{n}}{c \cdot c_t \log n}$ grid in $O(n \log n)$ time, with every cell containing in expectation $O((c \cdot c_t \log n)^2)$ points.

The second tool is the `local' Dijkstra algorithm. It determines for all points at most $\lambda$ away from a source point $s$ whether it has a $t$-path to $s$ and if so, their network distance. It differs from the standard Dijkstra algorithm in that it only adds the points to the queue at most $\lambda t$ away from the source $s$ by considering the points lying in cells at most $\lambda t$ away from $s$, and only considers the edges $E_s$ that have such a point as either endpoint. Using the grid this can be done in $O((\lambda^2 + |E_s|) \log \lambda)$ expected time.

The third tool is called \emph{path-hyperbola}. It is an area given by an origin point $u \in P$, a focus $v \in P$ and an edge set $E$, and is defined as $PH(u, v, E) = \{ a \in \Reals^2 \mid \delta_{(P, E)}(u, v) + t \cdot |va| \leq t \cdot |ua| \}$. Obviously, if $(p, q)$ bridges $(a, b)$, then $b \in PH(a, q, E)$ for every edge set $E$ with $t$-paths for pairs of points $(u, v)$ with $|uv| \leq |ab|$, making path-hyperbola at least as powerful as bridging points for guaranteeing $t$-paths.

If we perform a local Dijkstra on $s$, we find a set of network distances that induce a set of path-hyperbola. If $s$ is locally-$\lambda$-bridged, the union of path-hyperbola will be a superset of the area more than $\lambda$ away from $s$, guaranteeing $t$-paths to all other points. This union can be computed in $O(\lambda^2 \log \lambda)$ expected time: using polar coordinates, the union corresponds to a lower envelope. Since the hyperbolas pairwise intersect at most twice, this envelope has linear complexity and can be computed in $O(n \log n)$ time~\cite{At85,sharir1995cup:dsbook}. We can therefore use this to test in $O(\lambda^2 \log \lambda)$ expected time whether $s$ has a $t$-path to all other points: if the local Dijkstra finds only $t$-paths but $s$ is not locally-$\lambda$-bridged, we can perform a normal Dijkstra without affecting the expected running time.

\subsection{Testing $t$-spanners}

The first application of Theorem~\ref{theorem:uniformbridged} and our tools is a faster algorithm to test if a Euclidean graph is a $t$-spanner on uniformly distributed point sets: we simply run the procedure from the previous section on every point. To the best of our knowledge, this leads to the first subquadratic algorithm for this problem on any interesting class of point sets not making assumptions on $E$. 

\begin{theorem} \label{theorem:testing}
There is an algorithm that, given a point set $P$ whose points are uniformly distributed in a $\sqrt{n} \times \sqrt{n}$ square and a Euclidean graph $E$ on $P$, checks if $E$ is a $t$-spanner using $O((n + |E|) (c_t \log n)^2 \log (c_t \log n))$ expected time, where $c_t$ is a constant dependent only on $t$.
\end{theorem}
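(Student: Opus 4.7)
The plan is to perform a per-source local test and charge Theorem~\ref{theorem:uniformbridged} for the expected-time bound. Set $\lambda = c \cdot c_t \log n$. By Theorem~\ref{theorem:uniformbridged}, with probability at least $1 - n^{-c}$ the set $P$ is locally-$\lambda$-bridged, and then Lemma~\ref{lemma:bridgedness} reduces $t$-spannerness of $E$ to the property that every pair at Euclidean distance at most $\lambda$ has a $t$-path. The algorithm will build the grid described at the start of this section once in $O(n\log n)$ time and then process the sources $s \in P$ in turn, short-circuiting to ``not a $t$-spanner'' as soon as a failure is found.

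The per-source test has three steps. First, run the local Dijkstra from $s$ restricted to cells within $\lambda t$ of $s$ in $O((\lambda^2 + |E_s|)\log\lambda)$ expected time; this correctly computes $\delta(s,v)$ whenever $(s,v)$ admits a $t$-path of length at most $t\lambda$, because such a path is confined to the $t\lambda$-neighborhood of $s$. Second, verify $\delta(s,v) \le t|sv|$ for every $v$ with $|sv|\le \lambda$; if any check fails, report that $E$ is not a $t$-spanner. Third, build the polar lower envelope of the $O(\lambda^2)$ path-hyperbola $PH(s,v,E)$ induced by the Dijkstra output in $O(\lambda^2\log\lambda)$ expected time (using that they pairwise intersect at most twice), and check that the union covers the complement of the $\lambda$-disk around $s$. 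If it does, $s$ passes. If it does not, fall back to a global Dijkstra from $s$ at $O((n+|E|)\log n)$ cost, which definitively determines whether $s$ has a $t$-path to every other point.

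Correctness of the acceptance path follows by induction on $|sb|$. For $|sb|\le\lambda$, step two certifies a $t$-path directly. For $|sb|>\lambda$, step three yields some $q$ with $b\in PH(s,q,E)$, so $\delta(s,q)+t|qb|\le t|sb|$ by definition of path-hyperbola; since $\delta(s,q)>0$ we get $|qb|<|sb|$, and the inductive hypothesis supplies a $t$-path of length at most $t|qb|$ from $q$ to $b$, composing to a $t$-path of length at most $t|sb|$ from $s$ to $b$. Hence every source passing the local test certifies that $E$ is a $t$-spanner.

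For the running time I would sum over sources. The grid, Dijkstra, and envelope contributions give $O(n\lambda^2\log\lambda) + O(\log\lambda)\cdot\mathbb{E}[\sum_{s}|E_s|]$. The key estimate is $\mathbb{E}[\sum_{s}|E_s|]=O(|E|\lambda^2)$: each edge $(u,v)\in E$ enters $E_s$ only when $s$ is within $O(\lambda)$ of $u$ or $v$, and on a uniformly distributed point set the expected number of such sources is $O(\lambda^2)$. In the accepting case, Theorem~\ref{theorem:uniformbridged} bounds the fallback probability by $n^{-c}$, so fallbacks contribute $O(n^{1-c}(n+|E|)\log n)$ in expectation, which is negligible for $c$ chosen large; in the rejecting case, short-circuiting bounds the fallback cost to a single invocation. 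Substituting $\lambda=\Theta(c_t\log n)$ yields the claimed $O((n+|E|)(c_t\log n)^2\log(c_t\log n))$ bound. The main obstacle I expect is the edge-charge estimate $\mathbb{E}[\sum_s|E_s|]=O(|E|\lambda^2)$, which must hold for arbitrary adversarial $E$; I would phrase the charge per edge as a function of its endpoint positions together with the uniformly distributed points in $P\setminus\{u,v\}$, so that the distributional argument applies independently of $E$.
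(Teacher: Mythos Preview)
Your approach is essentially the paper's: local Dijkstra per source, direct verification of short pairs, path-hyperbola envelopes to certify long pairs, and a global fallback when the envelope fails to cover. Your correctness induction is sound and more explicit than the paper's one-line appeal to its three tools.

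There is, however, a gap in the running-time analysis for the rejecting case. The claim that ``short-circuiting bounds the fallback cost to a single invocation'' is not justified. If $E$ is not a $t$-spanner, a source $s$ can pass step two (all $(s,v)$ with $|sv|\le\lambda$ have $t$-paths), fail step three (the envelope needed a mandatory bridging edge $(p,q)$ that is absent from $E$), and still pass your per-source fallback (because $s$ itself may have $t$-paths to every point even though $(p,q)$ does not). You then continue to the next source, which may behave identically. Nothing prevents $\Theta(n)$ such fallbacks before you reach a source that actually witnesses the failure, costing $\Theta(n(n+|E|)\log n)$.

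The fix, implicit in the paper's proof, is to decouple the phases: first run \emph{all} local Dijkstras and perform \emph{all} short-pair checks; only if every short pair has a $t$-path do you proceed to envelope checks. If $P$ is locally-$\lambda$-bridged and $E$ is not a $t$-spanner, Lemma~\ref{lemma:bridgedness} forces some pair at distance $\le\lambda$ to lack a $t$-path, so the first phase already rejects with no fallback. If the first phase passes and $P$ is locally-$\lambda$-bridged, then $E$ \emph{is} a $t$-spanner (again Lemma~\ref{lemma:bridgedness}), so every mandatory bridging edge is present and every envelope covers. The fallback is therefore triggered only when $P$ is not locally-$\lambda$-bridged, probability $\le n^{-c}$ irrespective of $E$, and a single full $O(n(n+|E|)\log n)$ test suffices with negligible expected contribution.

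A smaller remark on your edge-charge estimate: since the adversary may choose $E$ after seeing the realisation of $P$, the per-edge expectation you sketch (conditioning on endpoint positions, averaging over $P\setminus\{u,v\}$) does not apply---$E$ can concentrate edges where points happen to cluster. The clean route is a high-probability bound: with probability $\ge 1-n^{-c}$ every disk of radius $O(\lambda)$ contains $O(\lambda^2)$ points (Chernoff, using $\lambda^2=\Omega(\log n)$), and under that event $\sum_s|E_s|=O(|E|\lambda^2)$ holds for \emph{every} $E$; the complementary event contributes negligibly.
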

\begin{proof}
Applying our three tools with $\lambda = c \cdot c_t \log n$ almost immediately gives us the desired result: we run a local Dijksta for every point, maintaining the union of the path hyperbola. If we find any pair of points without $t$-path, we return that the input is not a $t$-spanner. If some union of path-hyperbola for a point $s$ does not cover the area more than $\lambda$ away from $s$, we perform a $O(n^2 \log n)$ test for $t$-spannerness, and otherwise we return that the input is a $t$-spanner, which happens with high probability by Theorem~\ref{theorem:uniformbridged}. This algorithm therefore uses $O((n + |E|) (c_t \log n)^2 \log (c_t \log n))$ expected time.
\end{proof}

\subsection{Greedy Spanner}

\begin{algorithm}{GreedySpannerOriginal}[V, t]{\label{algo:greedyorig}}
  $E \gets \emptyset$
  \\ \qfor every pair of distinct points $(u, v)$ in ascending order of $|uv|$
  \\ \qdo \qif $\delta_{(V, E)}(u, v) > t \cdot |uv|$ \label{line:test}
       \\ \qthen add $(u, v)$ to $E$
          \qendif
     \qendfor
  \\ \qreturn $E$
\end{algorithm}

Consider the original algorithm above as introduced in~\cite{Keil:1988:ACE:61764.61787}.
The graph returned by this algorithm is called the \emph{greedy spanner} on $V$ for $t$ and it is obviously a $t$-spanner, but the algorithm has a $O(n^3 \log n)$ running time. 

\begin{lemma} \label{lemma:greedyshortedges}
If $P$ is $\lambda$-bridged, then the greedy spanner on $P$ does not have edges longer than $\lambda$.
\end{lemma}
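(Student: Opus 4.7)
The plan is to argue by contradiction: suppose the greedy spanner contains some edge $(a,b)$ with $|ab| > \lambda$, and derive a contradiction by exhibiting a $t$-path for $(a,b)$ at the moment the greedy algorithm considered that pair. Since $P$ is $\lambda$-bridged and $|ab| > \lambda$, there exists a mandatory pair $(p,q)$ that bridges $(a,b)$, yielding $t|ap| + |pq| + t|qb| \leq t|ab|$ and (as noted in the paper) $|ap|, |qb| < |ab|$. The missing ingredient, which drives the whole argument, is the additional inequality $|pq| < |ab|$, so that when $(a,b)$ is being considered, the edge $(p,q)$ is already present in the graph.

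The main obstacle is establishing $|pq| < |ab|$ from the bridging and mandatory hypotheses. I would prove this by contradiction: suppose $|pq| \geq |ab|$. Then the bridging inequality rearranges to $|ap| + |qb| \leq \frac{t-1}{t}|pq|$, while the triangle inequality and $|ab| \leq |pq|$ give $|bp| \leq |ab| + |ap| \leq |pq| + |ap|$. Adding $|bq|$ to both sides yields
\[
|bp| + |bq| \;\leq\; |pq| + |ap| + |bq| \;\leq\; |pq| + \frac{t-1}{t}|pq| \;=\; \frac{2t-1}{t}\,|pq|.
\]
Since $(t-1)^2 > 0$ for $t > 1$, we have $\frac{2t-1}{t} < t$, so $|bp| + |bq| < t|pq|$. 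This places $b$ strictly inside the closed ellipse with foci $p,q$ and eccentricity $1/t$, contradicting the mandatory-ness of $(p,q)$. Hence $|pq| < |ab|$.

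Once $|pq| < |ab|$ is known, the rest is routine. The pairs $(a,p)$, $(p,q)$, and $(q,b)$ are all strictly shorter than $|ab|$ and thus processed strictly earlier by the greedy algorithm. Processing $(a,p)$ leaves $\delta(a,p) \leq t|ap|$ in the partial graph, and similarly $\delta(q,b) \leq t|qb|$. As for $(p,q)$: because $(p,q)$ is mandatory, any $t$-path between $p$ and $q$ would require an intermediate vertex in the (empty) mandatory ellipse, so no such path exists without the edge $(p,q)$ itself, forcing the greedy algorithm to add it. Therefore, when the algorithm finally considers $(a,b)$, concatenating the paths through $p$ and $q$ gives $\delta(a,b) \leq t|ap| + |pq| + t|qb| \leq t|ab|$, so $(a,b)$ would not be added as an edge---contradicting the assumption that $(a,b)$ lies in the greedy spanner and proving the lemma.
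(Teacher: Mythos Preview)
Your proof is correct. The paper's own proof is a one-liner that simply invokes Lemma~\ref{lemma:bridgedness}: once the greedy algorithm has processed all pairs of distance at most $\lambda$, every such pair has a $t$-path, and Lemma~\ref{lemma:bridgedness} then says the current graph is already a $t$-spanner, so no further edges get added. Your argument is essentially the same induction on ascending $|ab|$ unrolled and applied directly to the greedy procedure, rather than routed through Lemma~\ref{lemma:bridgedness}.

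One point worth highlighting: you explicitly prove $|pq| < |ab|$ via the neat observation that otherwise $b$ would land strictly inside the mandatory ellipse of $(p,q)$. The paper only records $|ap|,|qb| < |ab|$ as ``earlier observations'' and leaves the status of $|pq|$ implicit in the inductive sketch of Lemma~\ref{lemma:bridgedness}; but that inequality is exactly what is needed to know that $(p,q)$ was processed (and hence added, by mandatory-ness) before $(a,b)$ is considered. So your write-up actually fills in a detail that the paper's terse proof of Lemma~\ref{lemma:bridgedness} glosses over.
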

\begin{proof}
After ensuring $t$-paths for all $(u, v)$ with $|uv| \leq \lambda$ the algorithm will not add more edges as all $(u, v)$ with $|uv| > \lambda$ have $t$-paths by Lemma~\ref{lemma:bridgedness}.
\end{proof}

We can combine Lemma~\ref{lemma:greedyshortedges} with Theorem~\ref{theorem:uniformbridged} to quickly compute the greedy spanner on uniform point sets. We first give a preliminary algorithm which we then employ in two greedy spanner algorithms.

\begin{theorem} \label{theorem:greedybasic}
For every $\lambda > 0$, there is an algorithm that, given a point set $P$ whose points are uniformly distributed in a $\sqrt{n} \times \sqrt{n}$ square, computes in $O(n \log n + n \lambda^2 \log^2 \lambda)$ expected time the edges of the greedy spanner on $P$ for $t$ of length at most $\lambda$.
\end{theorem}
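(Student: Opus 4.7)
The plan is to localize the construction of the short greedy edges by splitting the work across the grid cells, running in each cell a standard quadratic-time greedy spanner subroutine on a local point window. The property that makes this sound is that any $t$-path between two points at Euclidean distance at most $\lambda$ lies entirely inside the $t\lambda$-neighbourhood of either endpoint, so every greedy decision for a pair of length at most $\lambda$ depends only on an $O(\lambda^2)$-point window.

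First I would build the grid of cells of side length $\Theta(\lambda)$ in $O(n \log n)$ time; by the uniform distribution of $P$, every cell and every constant-factor enlargement of a cell contains $O(\lambda^2)$ points in expectation. For each cell $C$ let $S_C$ denote its $(2t+2)\lambda$-neighbourhood, still of expected size $O(\lambda^2)$. The buffer factor is chosen large enough that for every candidate pair $(u,v)$ with $u \in C$ and $|uv|\le\lambda$, every $t$-path between $u$ and $v$ lies inside $S_C$.

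Next, for each cell $C$ I would run a quadratic-time exact greedy-spanner subroutine (for instance the FG-Greedy of~\cite{AlewijnseBBB13}, whose expected running time on $m$ input points is $O(m^2 \log^2 m)$) on the point set $S_C$, restricted to candidate pairs of length at most $\lambda$, and output only those spanner edges with at least one endpoint in $C$. A single cell costs $O(|S_C|^2 \log^2 |S_C|) = O(\lambda^4 \log^2 \lambda)$ in expectation; summed over the $O(n/\lambda^2)$ cells the total is $O(n \lambda^2 \log^2 \lambda)$, which together with the $O(n \log n)$ grid setup matches the claimed bound.

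The main obstacle is correctness: the output of the cell-local routine, restricted to edges incident to $C$, must equal the restriction of the globally greedy spanner to edges incident to $C$. By the containment property above, the greedy decision for any pair $(u,v)$ with $u\in C$ and $|uv|\le\lambda$ depends only on edges and vertices inside $S_C$. A strong induction on the sorted list of candidate pairs then shows that the local and global algorithms make identical decisions on every such pair, provided the $(2t+2)\lambda$ buffer around $C$ is wide enough so that every earlier pair's relevant $O(\lambda^2)$-window also lies inside $S_C$; a constant multiple of $\lambda$ suffices uniformly because only pairs of length at most $\lambda$ are ever processed.
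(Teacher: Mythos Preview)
Your approach is different from the paper's, and the correctness argument has a real gap.

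The paper does \emph{not} decompose the computation into independent cell-local subproblems. Instead it runs the WSPD-based greedy algorithm of~\cite{AlewijnseBBB13} globally, but (i) discards all well-separated pairs of length exceeding $\lambda$, and (ii) replaces every Dijkstra call by the local Dijkstra that only explores vertices within distance $t\lambda$ of the source. Step~(ii) is sound because any $t$-path certifying a pair of length at most $\lambda$ fits in that ball; step~(i) just means we stop after all short pairs are handled. Crucially, the processing order and the edge set being queried are always the \emph{global} ones, so every decision is identical to what Algorithm~\ref{algo:greedyorig} would make. The expected-time bound then comes from bounding $\sum_i \min(|A_i|,|B_i|)$ over the surviving well-separated pairs, which is $O(n\log\lambda)$ on uniform points.

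Your plan instead runs an independent greedy computation on each window $S_C$ and hopes that the edges it outputs near $C$ coincide with the global greedy edges. The induction you sketch does not close. To decide a pair $(u,v)$ with $u\in C$ correctly, the local run must have the \emph{same} edge set as the global run on all shorter pairs whose endpoints lie in the ellipse of $(u,v)$; but those shorter pairs need not have an endpoint in $C$, so your induction hypothesis does not cover them. If you strengthen the hypothesis to ``local and global agree on every pair in $S_C$'', then pairs near the boundary of $S_C$ fail the base of the induction: their ellipses stick out of $S_C$, the global algorithm may see a $t$-path through an outside point, and the local run adds a spurious edge. That spurious edge can then suppress a later edge that the global algorithm does add, and this discrepancy can propagate inward by roughly $t\lambda$ at each subsequent pair. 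Since there are $\Theta(\lambda^4)$ candidate pairs in $S_C$, no $O(\lambda)$ buffer prevents the corruption from reaching $C$. The sentence ``a constant multiple of $\lambda$ suffices uniformly because only pairs of length at most $\lambda$ are ever processed'' bounds the size of \emph{one} ellipse, not the depth of this cascade; the greedy spanner simply does not have the locality property you are assuming. A secondary symptom of the same issue is that two neighbouring cells can disagree on whether a shared edge is present.

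In short, the running-time arithmetic is fine, but the algorithm as described can output a graph that is not the greedy spanner. The fix the paper uses---keep the computation global but make each Dijkstra local---sidesteps the dependency cascade entirely.
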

\begin{proof}
We use the algorithm introduced in~\cite{AlewijnseBBB13} (we omit an explanation of the machinery introduced there), except we keep Lemma~\ref{lemma:greedyshortedges} in mind and use our local Dijkstra instead of a normal Dijkstra and only consider well-separated pairs $\{ A_i, B_i \}$ with $\min(A_i, B_i) \leq \lambda$.

Using the analysis in~\cite{AlewijnseBBB13} and using that the greedy spanner has degree $O(1)$, we conclude that if $m$ is the number of considered well-separated pairs, the running time of our modified algorithm is $O(n \log n + \lambda^2 \log \lambda \sum_{i=1}^m \min(|A_i|, |B_i|))$. We therefore need to bound \\$\sum_{i=1}^m \min(|A_i|, |B_i|) \leq \sum_{i=1}^m (|A_i| + |B_i|) = \sum_{a \in P} |\{ \{ A_i, B_i \} \mid a \in A_i \vee a \in B_i \}|$.

For any $l \in \Reals$, a point $p$ can only be in $O(1)$ well-separated pairs of length at most a constant factor higher or lower than $l$~\cite[Lemma 4.6.1]{Callahan95dealingwith}. We can therefore partition the well-separated pairs containing $p$ into $O(1)$-sized sets of similar length. As the minimal length per set differs by at least a constant factor, we conclude $|\{ \{ A_i, B_i \} \mid a \in A_i \vee a \in B_i \}| = O\left(\log \frac{\max_i\{l(\{ A_i, B_i \})\}}{\min_i\{l(\{ A_i, B_i \})\}}\right)$. This last expression is $O(\log \lambda)$ in expectation on uniform point sets, giving an expected running time of $O(n \log n + n \lambda^2 \log^2 \lambda)$.
\end{proof}

Note that we could have adapted the algorithm from~\cite{BoseCFMS2010}, but this algorithm sorts all potential edges, resulting in an expected $O(n \log n \lambda^2 \log \lambda)$ running time, which is slower when filling in $\lambda = O(\log n)$.

Combining Lemma~\ref{lemma:greedyshortedges}, Theorem~\ref{theorem:uniformbridged} and Theorem~\ref{theorem:greedybasic} (with $\lambda = c \cdot c_t \log n$) gives:

\begin{corollary}
	There is an algorithm that, given a point set $P$ whose points are uniformly distributed in a $\sqrt{n} \times \sqrt{n}$ square, computes in \\$O(n (c_t \log n)^2 \log^2 (c_t \log n))$ expected time a graph on $P$ which is with high probability the greedy $t$-spanner (with $c_t$ is a constant dependent only on $t$).
\end{corollary}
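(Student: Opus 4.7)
The plan is a direct combination of the three preceding results. Fix any constant $c > 0$ (for concreteness, $c = 1$) and set $\lambda = c \cdot c_t \log n$. Invoke the algorithm of Theorem~\ref{theorem:greedybasic} with this value of $\lambda$. By that theorem, the algorithm returns, in expected time
\[
O(n \log n + n \lambda^2 \log^2 \lambda) \;=\; O\bigl(n\, (c_t \log n)^2 \log^2(c_t \log n)\bigr),
\]
exactly the set of edges of the greedy $t$-spanner on $P$ whose length is at most $\lambda$. This immediately gives the claimed running time, since the bound from Theorem~\ref{theorem:greedybasic} is an expectation over the uniform distribution of $P$ and requires no conditioning.

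For correctness, Theorem~\ref{theorem:uniformbridged} guarantees that with probability at least $1 - n^{-c}$ the point set $P$ is locally-$\lambda$-bridged, and in particular $\lambda$-bridged. On this event, Lemma~\ref{lemma:greedyshortedges} shows that the greedy $t$-spanner on $P$ contains no edge of length exceeding $\lambda$. Therefore, when this event holds, the edges returned by the algorithm are precisely the edges of the full greedy $t$-spanner, which yields the ``with high probability'' correctness statement.

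The only point worth checking is that the expected-time and the high-probability correctness bounds can be reported together without interference. Since Theorem~\ref{theorem:greedybasic} does not assume any bridging event in its running-time analysis, the low-probability case in which $P$ fails to be locally-$\lambda$-bridged affects only the correctness of the output and not the expected cost of computing it. Thus no further analysis is needed for the failure case, and the corollary follows by stringing together Theorem~\ref{theorem:uniformbridged}, Lemma~\ref{lemma:greedyshortedges}, and Theorem~\ref{theorem:greedybasic} as above. The only mild subtlety in the argument is choosing $c$ large enough that ``with high probability'' has whatever meaning the reader prefers; since $c$ enters the running time only through the constant factor hidden in $c_t \log n$, any fixed $c$ suffices.
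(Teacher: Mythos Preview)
Your proof is correct and follows exactly the same approach as the paper, which simply states that the corollary follows by combining Lemma~\ref{lemma:greedyshortedges}, Theorem~\ref{theorem:uniformbridged}, and Theorem~\ref{theorem:greedybasic} with $\lambda = c \cdot c_t \log n$. Your additional remarks about the independence of the expected-time bound from the bridging event are a helpful clarification but not a departure from the paper's argument.
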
 

\subsection{The Full Distribution-Sensitive Algorithm} \label{subsection:missingedges}

The algorithm from Theorem~\ref{theorem:greedybasic} is the first phase of our distribution sensitive algorithm. We now present the second and third phase that ensure that all long edges are also computed.

The second phase gathers path-hyperbola as described at the start of this section. We then consider the well-separated pairs that did not get considered in the first stage of the algorithm and try to prove for them that they will not produce a greedy spanner edge. For the remaining pairs, we employ the algorithm of~\cite{AlewijnseBBB13} in the third phase of our algorithm to find the remaining greedy 
spanner edges.

If for a point $u \in A_i$, the bounding box $B_i$ is covered by the union of path-hyperbola computed for $u$ (testing this takes $O(\log n)$ time), then we say $u$ is \emph{discounted} with respect to $\{ A_i, B_i \}$. If all $u \in A_i$ are discounted, then $\{ A_i, B_i \}$ will not contain a greedy spanner edge and we say $\{ A_i, B_i \}$ is \emph{discounted}. This can be computed in $O(\log n \sum_{i=1}^m (|A_i| + |B_i|)) = O(n \log n \log \lambda)$ expected time by an earlier argument.

We then perform the algorithm from~\cite{AlewijnseBBB13}, with small differences. We ignore pairs that have been discounted in the previous phase, and we do not perform a Dijkstra operation on points which have been discounted with respect to that pair as well. By Theorem~\ref{theorem:uniformbridged}, all pairs are discounted with high probability and hence this phase takes constant time in expectation on uniform point sets.

In practice, using a $\lambda$ lower than predicted by Theorem~\ref{theorem:uniformbridged} will suffice and be faster. From experiments we observe that $\lambda = \frac{\log n}{\sqrt[4]{t-1} \log \log n}$ is the `right' bound for the length of the longest edge in the greedy spanner. Using $1.1 \cdot \lambda$ the initial phase nearly always finds all edges, with the second phase usually discounting 99.7\% of the pairs and 95\% of the points in undiscounted pairs, with the second phase taking about 20\% of the time of the first. Using $1.5 \cdot \lambda$, all pairs are typically discounted.

\begin{theorem}
There is an algorithm that, given $t$ and a point set $P$ whose points are uniformly distributed in a $\sqrt{n} \times \sqrt{n}$ square, computes in $O(n (c_t \log n)^2 \log^2 (c_t \log n))$ expected time its greedy spanner, with $c_t$ a constant dependent only on $t$. The algorithm uses $O(n^2 \log^2 n)$ time on arbitrary $P$.
\end{theorem}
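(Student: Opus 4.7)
The plan is to execute the three-phase distribution-sensitive algorithm described just above. Phase~1 invokes Theorem~\ref{theorem:greedybasic} with $\lambda = c \cdot c_t \log n$ to compute every greedy spanner edge of length at most $\lambda$. Phase~2 performs, for every $u \in P$, a local Dijkstra up to distance $\lambda t$ in the edge set produced by Phase~1, builds the union of the induced path-hyperbola in $O(\lambda^2 \log \lambda)$ expected time per source, and then walks over the well-separated pairs $\{A_i,B_i\}$ to test whether every $u \in A_i$ is discounted against $B_i$. Phase~3 runs the WSPD-based algorithm of~\cite{AlewijnseBBB13}, but it skips any pair flagged as discounted, and within a non-discounted pair it skips any source already flagged as discounted. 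Correctness of the discounting rule is exactly the content of the two paragraphs above: a discounted source cannot produce a greedy spanner edge into its partner bounding box, since each path-hyperbola point witnesses a bridging $t$-path built from strictly shorter edges already present.

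For the expected-time bound on uniform $P$, Phase~1 contributes $O(n (c_t \log n)^2 \log^2 (c_t \log n))$ by Theorem~\ref{theorem:greedybasic}. Phase~2 contributes $O(n \lambda^2 \log \lambda)$ for the local Dijkstras and envelope computations plus $O(n \log n \log \lambda)$ for the discount tests, both dominated by the Phase~1 bound. By Theorem~\ref{theorem:uniformbridged}, with probability at least $1 - n^{-c}$ the set $P$ is locally-$\lambda$-bridged; by Lemma~\ref{lemma:greedyshortedges} every greedy spanner edge is then of length at most $\lambda$ and is therefore already present after Phase~1, and by the cone-covering argument inside the proof of Theorem~\ref{theorem:uniformbridged} the mandatory bridging pairs witnessing local bridgedness are themselves short edges, so their associated path-hyperbola cover the region beyond $\lambda$ around every source. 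Consequently every well-separated pair is discounted with probability $1 - n^{-c}$, and Phase~3 terminates in $O(1)$ expected time. The residual event of mass $n^{-c}$ contributes at most $O(n^{2-c} \log^2 n)$ to the expectation, which is absorbed into the claimed bound by taking $c$ large enough in the instantiation of Theorem~\ref{theorem:uniformbridged}.

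For the worst-case bound on arbitrary $P$, Phase~1 and Phase~2 both cost at most $O(n^2 \log^2 n)$: Phase~1 because the considered well-separated pairs and local Dijkstras contribute $O(n^2)$ work in total with a $\log^2 n$ overhead, and Phase~2 for the same reason applied to a source radius of at most the diameter. Phase~3 inherits the $O(n^2 \log^2 n)$ bound of~\cite{AlewijnseBBB13} even when no pair is discounted, so the overall worst case is $O(n^2 \log^2 n)$.

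The main obstacle is the coupling between the three phases: one has to argue that the \emph{edges} extracted in Phase~1 already supply the bridging witnesses that the path-hyperbola machinery in Phase~2 relies on, so that local bridgedness translates from a purely geometric property (mandatory pairs on points) into an algorithmic certificate (discounted well-separated pairs). Here we use two facts proved earlier: every mandatory pair must be an edge of any $t$-spanner, and the witnessing mandatory pair produced by Lemma~\ref{lemma:conelemma} in the proof of Theorem~\ref{theorem:uniformbridged} lies within distance $\lambda$ of the source point. Combining these with Lemma~\ref{lemma:greedyshortedges} closes the loop and yields the claimed expected running time together with the near-quadratic worst-case guarantee.
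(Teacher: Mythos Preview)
Your proof is correct and follows essentially the same three-phase argument the paper lays out in the text immediately preceding the theorem (the paper gives no separate formal proof). If anything, you are more careful than the paper: you make explicit the $n^{-c}\cdot O(n^2\log^2 n)$ contribution of the failure event to the expectation, and you spell out the coupling step---that the mandatory bridging pairs from Theorem~\ref{theorem:uniformbridged} are greedy-spanner edges of length at most $\lambda$ (via Lemma~\ref{lemma:greedyshortedges}) and hence are present after Phase~1 and reachable by the local Dijkstra---which the paper asserts only implicitly when it says ``by Theorem~\ref{theorem:uniformbridged}, all pairs are discounted with high probability.''
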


\safespace
\section{Experimental Results} \label{section:results}
\safespace

We have run our algorithm and WSPD-Greedy from~\cite{AlewijnseBBB13} on point sets whose size ranged from 500 to 128,000 points. The WSPD-Greedy algorithm has a running time comparable to the other (quadratic space) algorithms. Since running these on more then 10,000 points quickly becomes infeasible we did not include them in our experiments. For a detailed comparison between the major quadratic space algorithms and WSPD-Greedy we refer to~\cite{AlewijnseBBB13}. Note that we have verified that all our implemented algorithms give the same output.

Throughout this section we will refer to our algorithm as ``Bucketing'' in the graphs. We generated point sets according to several distributions. We have recorded space usage and running time (wall clock time). The results are averages over several runs where new point sets were generated each time. We included graphs for the uniform point set and for a clustered point set as these represent the best and worst cases respectively for our algorithm (with respect to our set of tests). To generate the clustered point set we used the same method as~\cite{AlewijnseBBB13}, that is, for $n$ points, it consists of $\sqrt{n}$ uniformly distributed point sets of $\sqrt{n}$ uniformly distributed points.

\subsection{Environment}

The algorithms have been implemented in C++. The random generator used was the Mersenne Twister PRNG -- we have used a C++ port by J. Bedaux of the C code by the designers of the algorithm,  M. Matsumoto and T. Nishimura. We have implemented all other necessary data structures and algorithms not already in the \verb|std| ourselves. The implementations do not use parallelism and run on a single thread.

Our experiments have been run on a server using an Intel Xeon E5530 CPU (2.40GHz) and 8GB (1600 MHz) RAM. It runs the Debian 7 OS and we compiled for 64 bits using G++ 4.7.2 with the -O3 option.

\subsection{Dependence on Instance Size}

We have compared running time and space usage of WSPD-Greedy and our algorithm for different values of $n$. We plotted the running time for $t=2$ on uniform and clustered points in Fig.~\ref{figure:runningtimeplot}. The space usage for both algorithms is linear but our algorithm uses a constant factor less space in practice.

\begin{figure}[h!]\centering\vspace{-2\baselineskip}
\includegraphics[width=8cm]{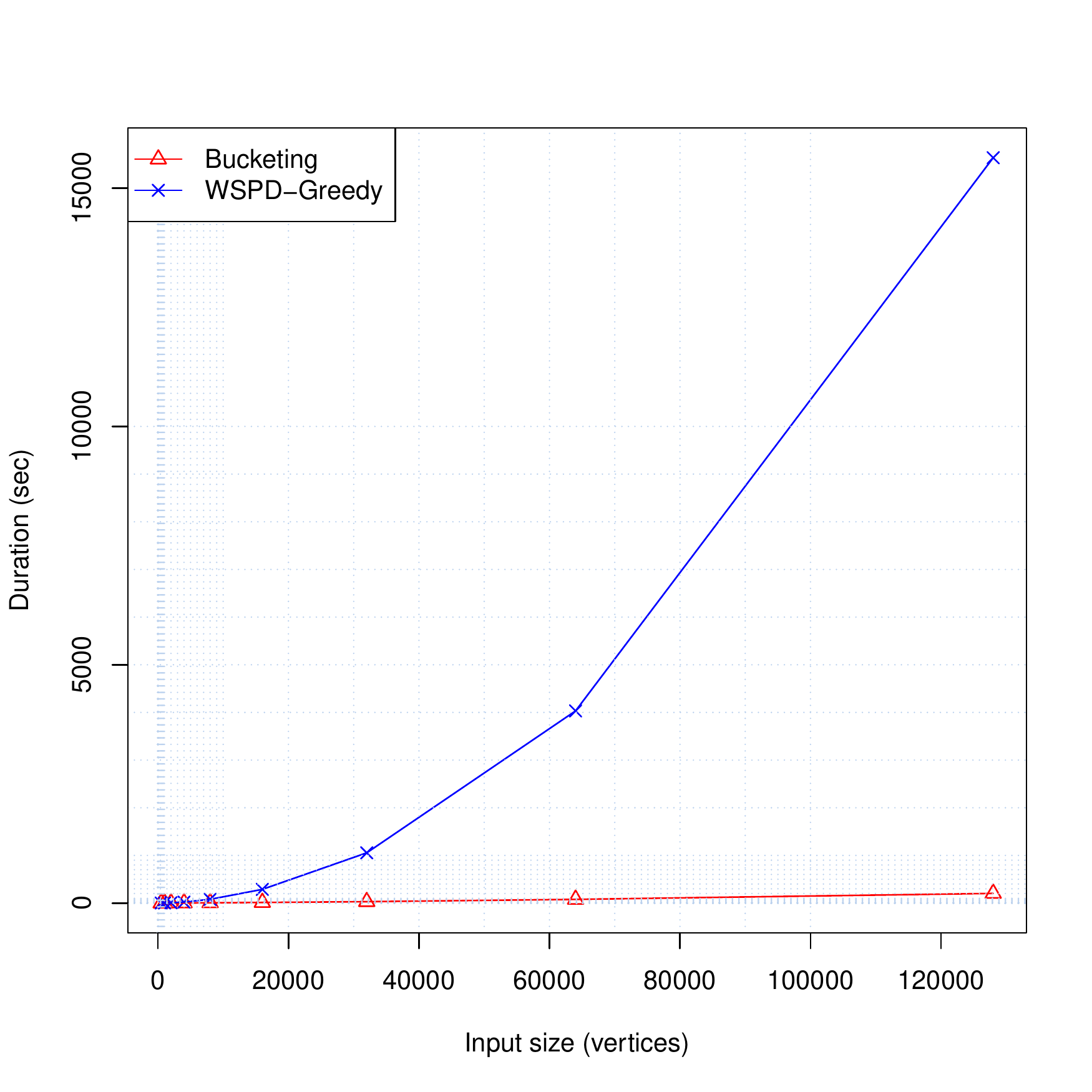}
\includegraphics[width=8cm]{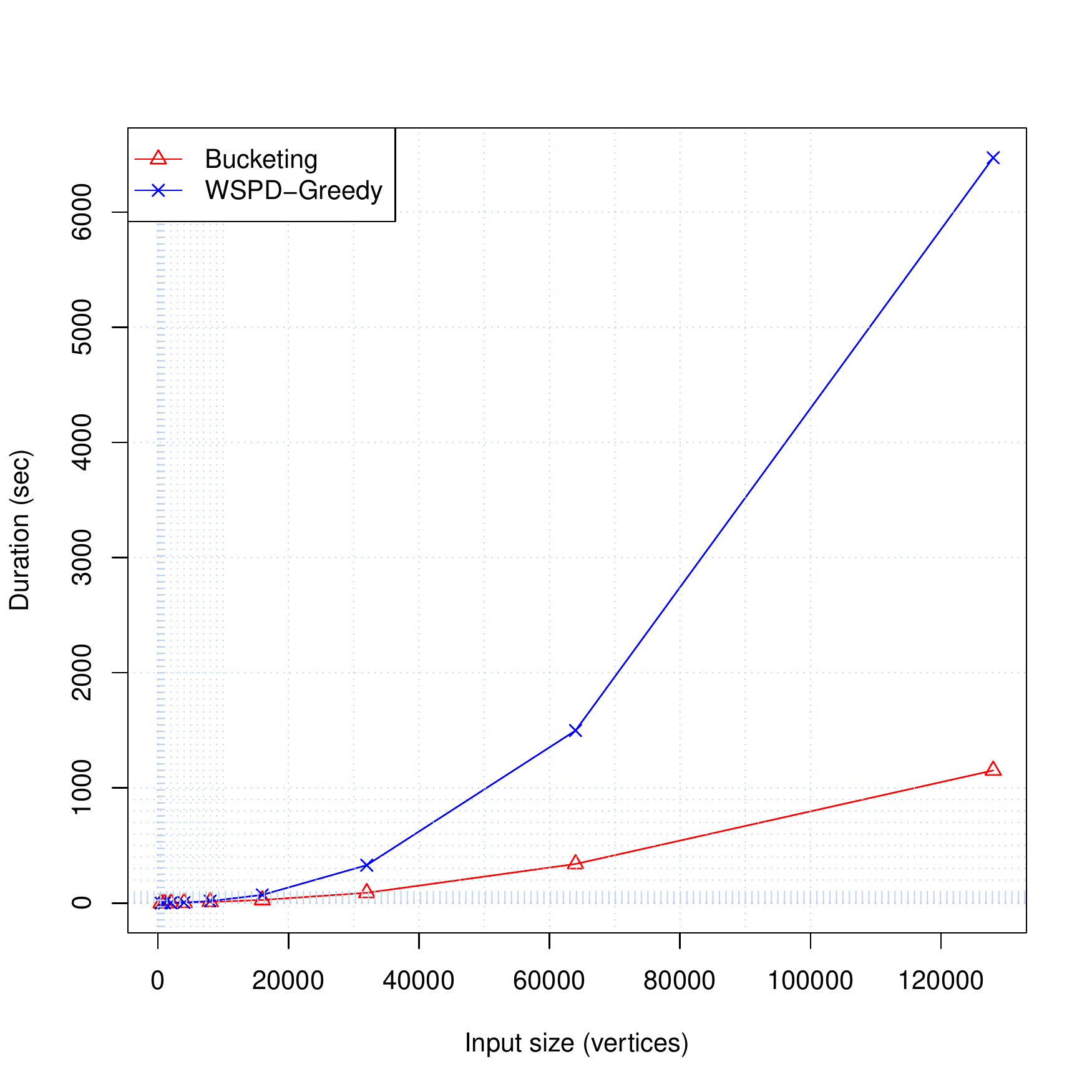}
\vspace{-1em}
\caption{The left plot shows the running time of our algorithm (Bucketing) and WSPD-Greedy for $t=2$ on variously sized uniformly distributed instances. The right plot shows the same for clustered instances.}
\vspace{-1.5em}
\label{figure:runningtimeplot}
\end{figure}
\begin{figure}[h!]\centering\vspace{-1\baselineskip}
\includegraphics[width=8cm]{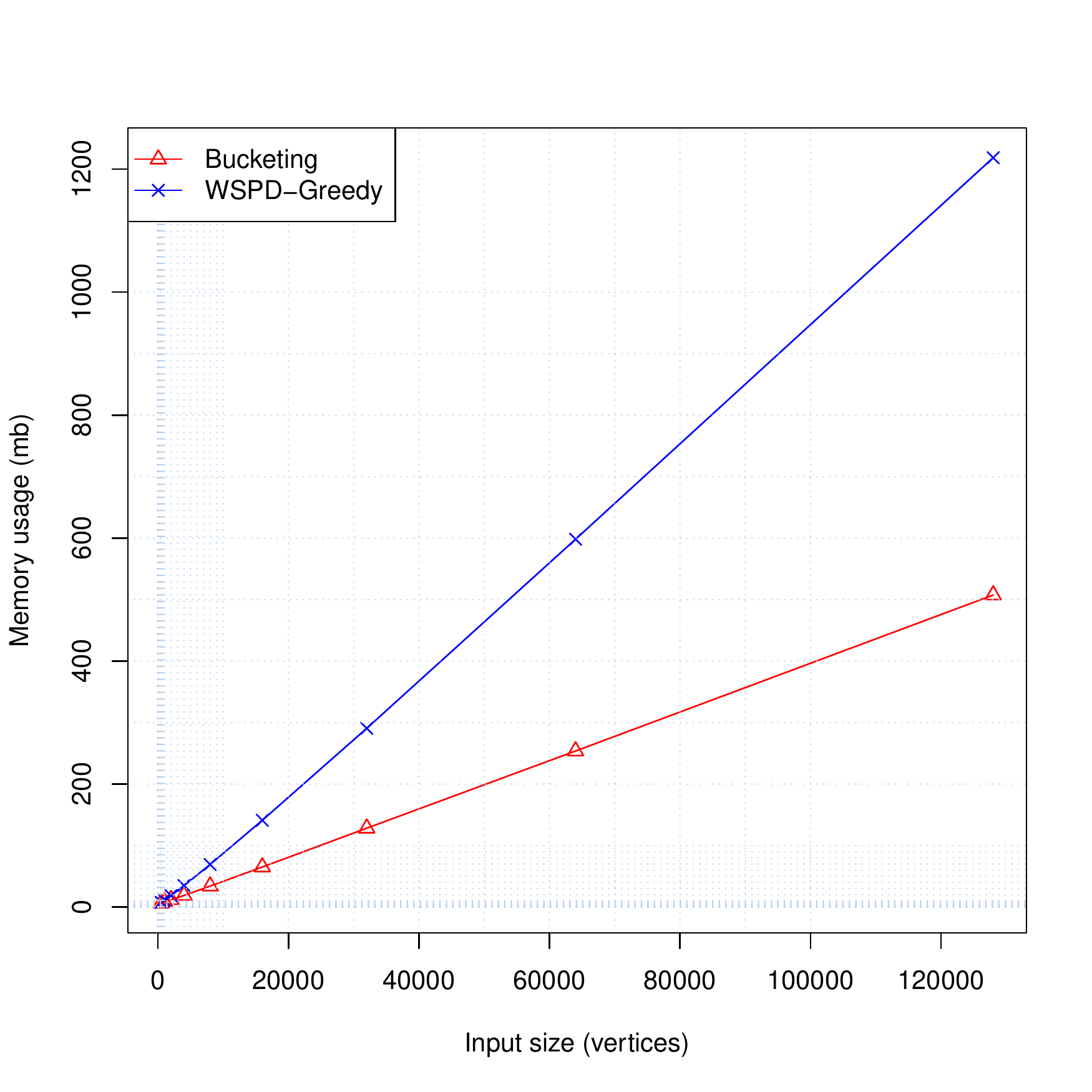}
\includegraphics[width=8cm]{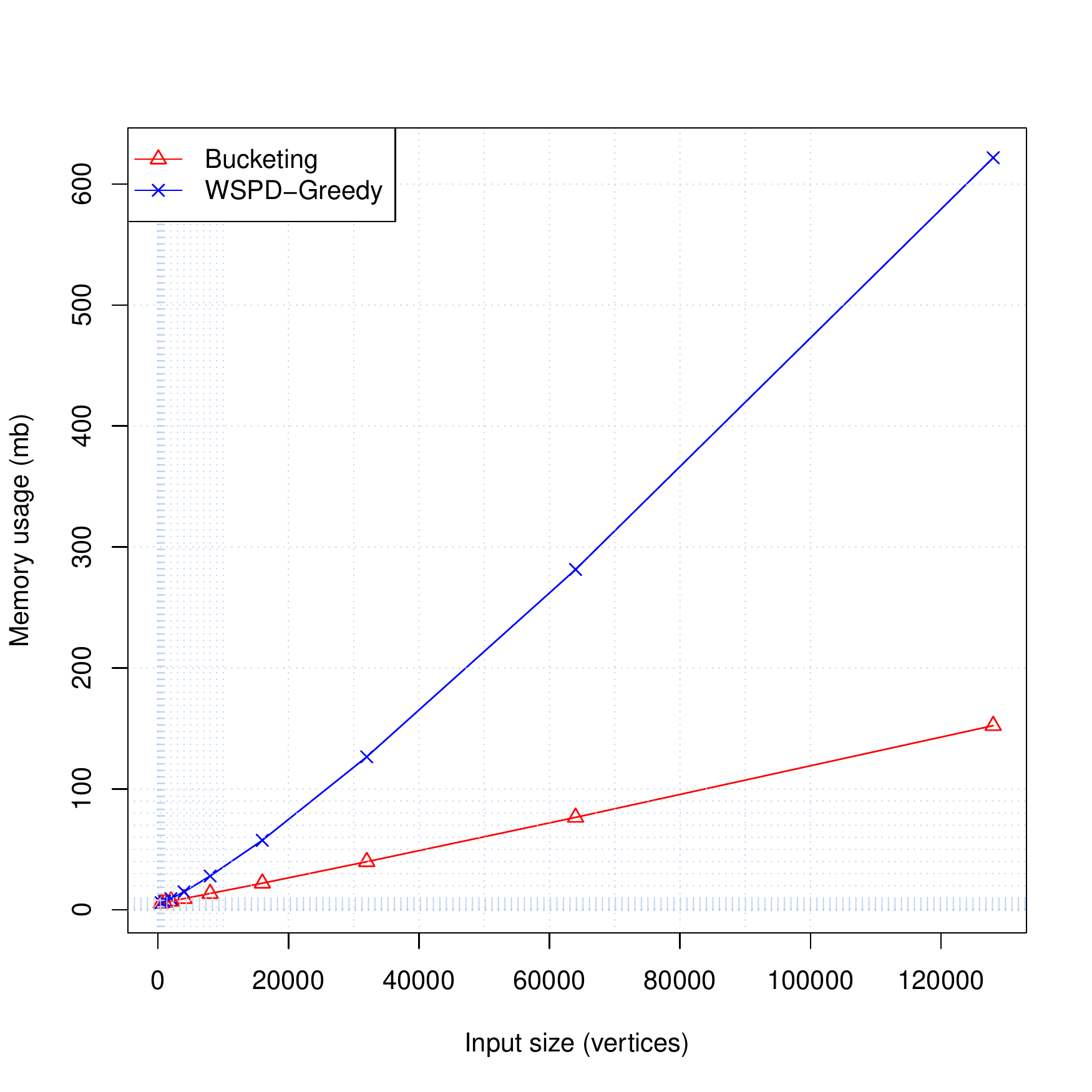}
\vspace{-1em}
\caption{The left plot shows the space usage of our algorithm (Bucketing) and WSPD-Greedy for $t=2$ on variously sized uniformly distributed instances. The right plot shows the the same for clustered instances.}
\vspace{-1\baselineskip}
\label{figure:squaresplot}
\end{figure}

The running time of our algorithm on uniformly distributed points is (nearly) linear making it a massive improvement over WSPD-Greedy. This allows us to calculate greedy spanners on such point sets in a matter of minutes where WSPD-Greedy would need hours or even days for bigger instances. 
 
The clustered point set is a bad case for our algorithm since the greedy spanner will contain a considerable amount of really large edges between clusters. Nevertheless, the algorithm still outperforms WSPD-Greedy by quite a margin. Our experiments on clustered data with smaller $t$ values (up to $t=1.1$) show that the performance of the algorithms gets more similar as $t$ decreases. On point sets drawn using a uniform or normal distribution our algorithm massively outperforms WSPD-Greedy for both small and large $t$. Additional plots for $t=1.1$ and for point sets using the normal distribution can be found in appendix \ref{app:plots}.
 
\subsection{Real Data}

Aside from generated instances we also experimented on some real point sets from the TSPLIB\footnote{http://comopt.ifi.uni-heidelberg.de/software/TSPLIB95/}. The performance of our algorithm on these sets seems to be close to the uniform point sets. Figure \ref{fig:real} shows two point sets and their greedy spanners. For the PCB the computation took on average about 2 seconds for $t=2$ and 11 seconds for $t=1.1$. The same computations using WSPD-Greedy took 12 and 203 seconds respectively. The bigger Germany instance took 21 and 147 seconds to compute using our algorithm while WSPD-Greedy needed 274 and 7,486 seconds for $t=2$ and $t=1.1$. This is a factor 50 improvement for the low $t$ case which reduces the computation time from hours to minutes.
 

\begin{figure}[h!]\centering
\vspace{-.5em}
\includegraphics[width=3.9cm]{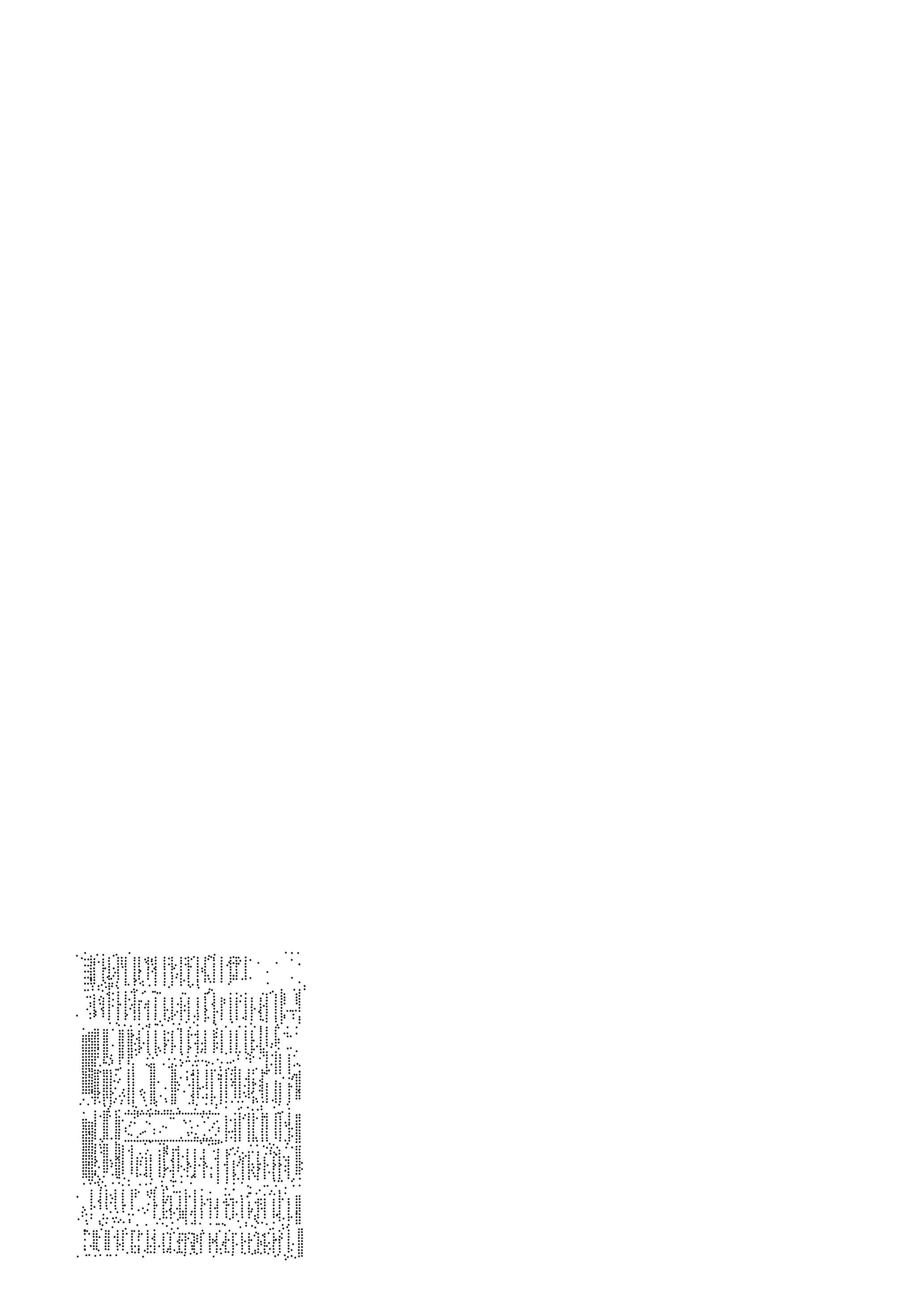}
\includegraphics[width=3.9cm]{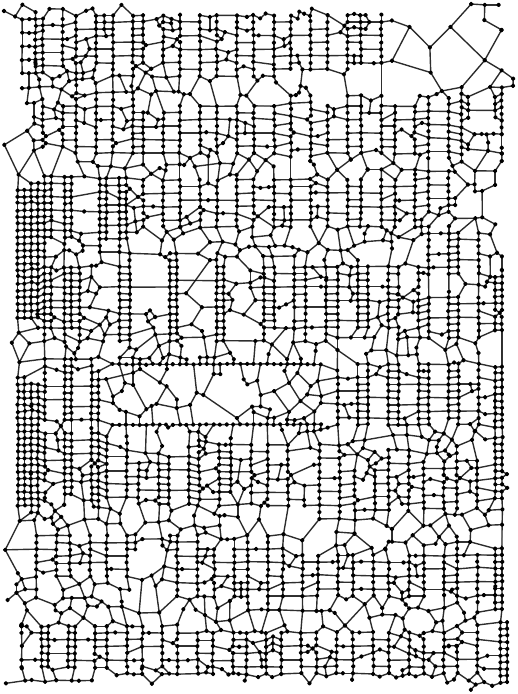} 
\includegraphics[width=3.9cm]{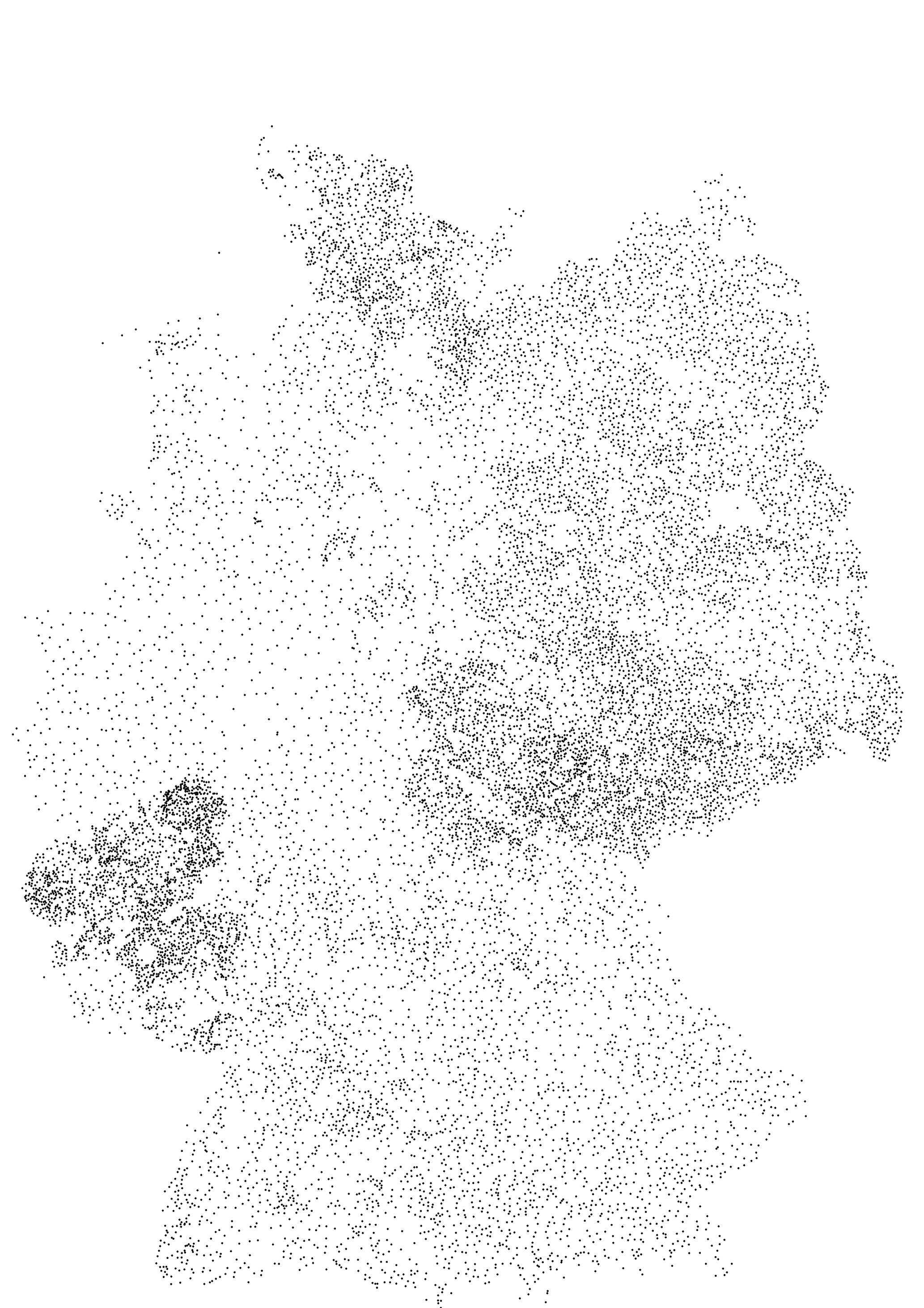}
\includegraphics[width=3.9cm]{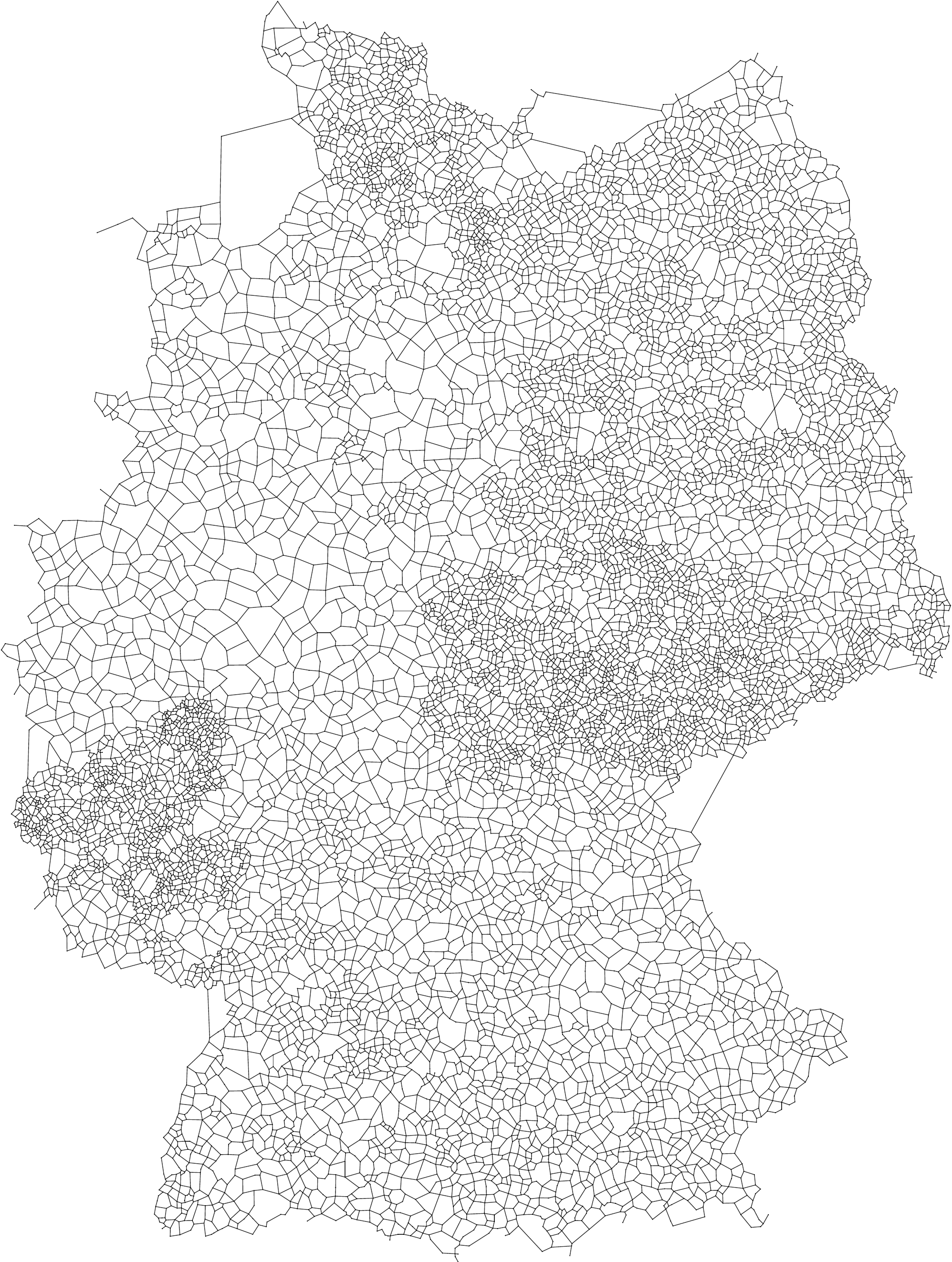}
\vspace{-.5em}
\caption{Real point sets from the TSPLIB and their greedy spanners using $t=2$. Left: A PCB instance of 3,038 points. Right: Cities in Germany, 15,112 points.}
\label{fig:real}
\end{figure}
 
\safespace
\section{Conclusion}
\safespace

We have introduced a distribution sensitive algorithm for computing the greedy spanner. Experiments show large improvements in both time and space for most data sets, while results are never worse than the state-of-the-art. The performance gap in many cases becomes even larger for lower $t$. To explain these results, we have analyzed the algorithm on uniformly distributed point sets.

To this end, we have introduced the concept of \emph{bridgedness} and have shown that point sets that are uniformly distributed in a $\sqrt{n} \times \sqrt{n}$ square are $O(\log n)$-bridged with high probability. This implies that `$t$-spannerness' is a `local' property on these point sets: a Euclidean graph is a $t$-spanner if and only if all pairs of `close-by' points have $t$-paths. This locality shows that our algorithm is near-linear on these point sets and yields a near-linear time algorithm for testing whether an edge set is a $t$-spanner on these point sets.

We leave open several questions that may be answered in future work. First, in our experiments, we have observed that the length of the longest edge of the greedy spanner on uniform point sets tends towards $\frac{\log n}{\sqrt[4]{t-1} \log \log n}$, leaving a gap with our upper bound; similarly, our bridgedness bound may also be improvable. 
Secondly, it would be interesting to see if our results generalize to higher dimensions. Lastly, there is still no general subquadratic time algorithm for the greedy spanner. Our algorithm could be considered a divide and conquer algorithm where the conquer step may be very slow, possibly susceptible to improvement.
\newpage
\bibliography{refs}
\bibliographystyle{abbrv}
\newpage
\appendix

\section*{\Large Appendix}

\section{Additional plots} \label{app:plots}

 \begin{figure}[h!]\centering
 \includegraphics[width=8cm]{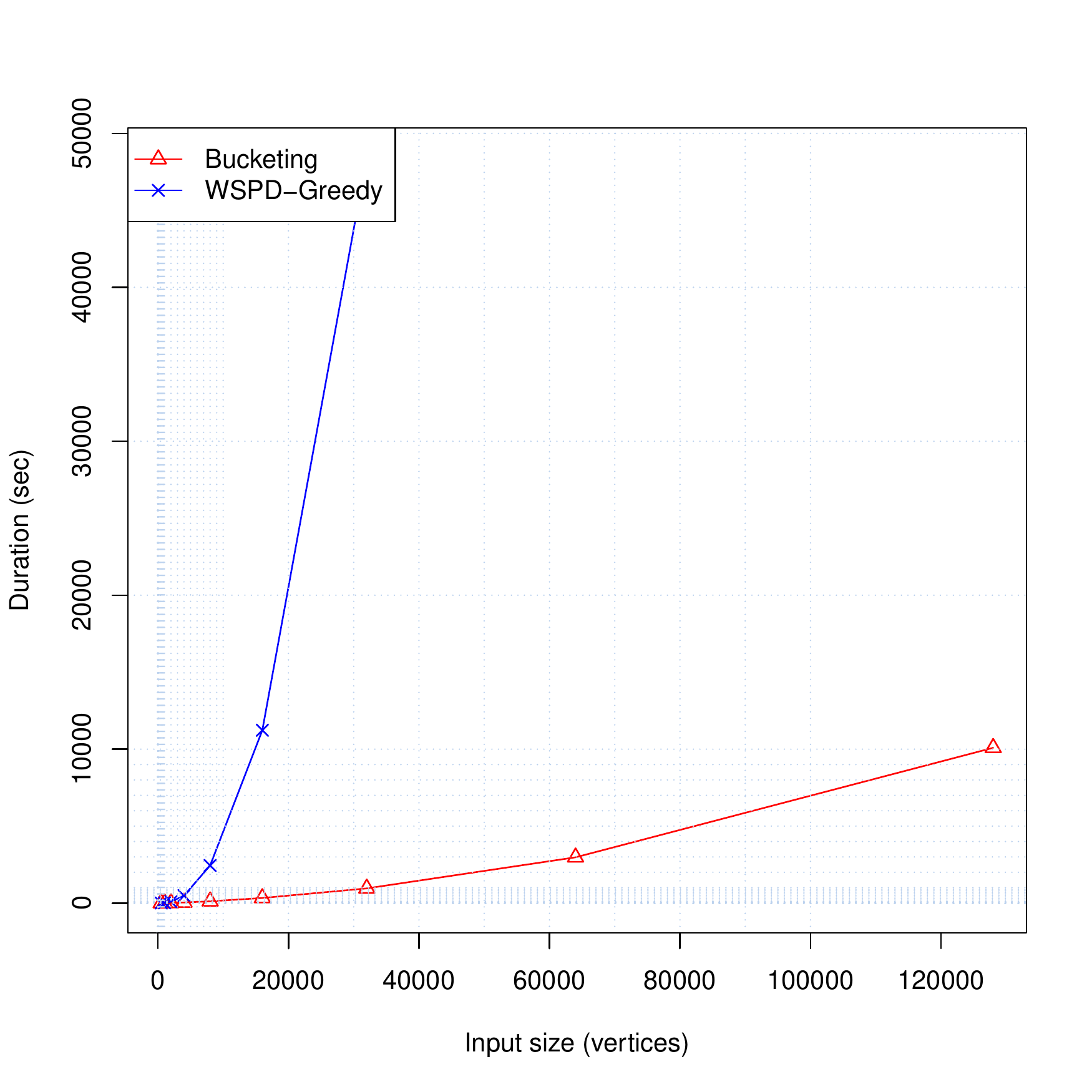}
 \includegraphics[width=8cm]{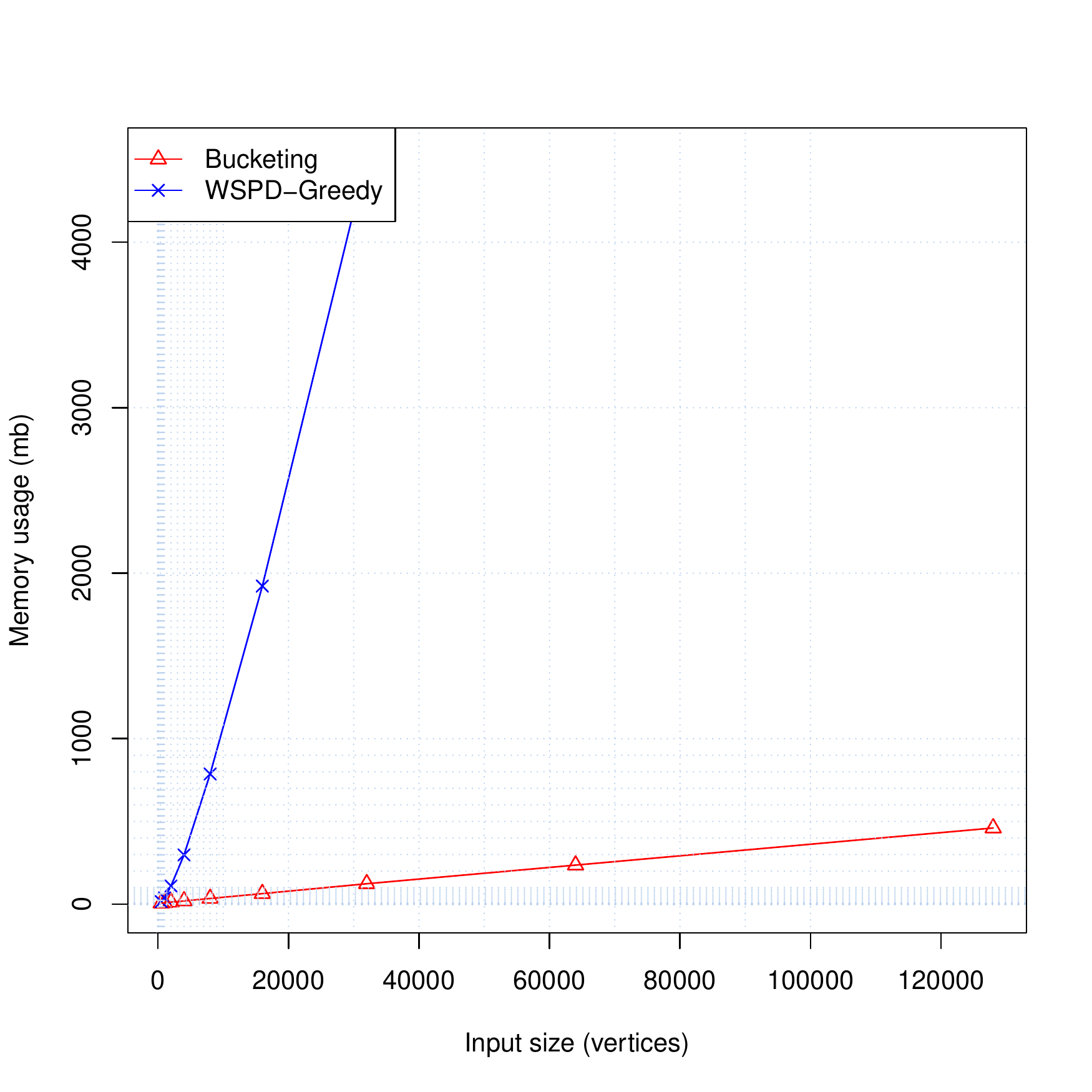}
 \vspace{-.5em}
 \caption{The left plot shows the running time of our algorithm (Bucketing) and WSPD-Greedy for $t=1.1$ on variously sized point sets generated with a normal distribution. The right plot shows the memory usage on the same data}
 \vspace{-1.5em}
 \label{figure:normaltplot}
 \end{figure}

 \begin{figure}[h!]\centering
 \includegraphics[width=8cm]{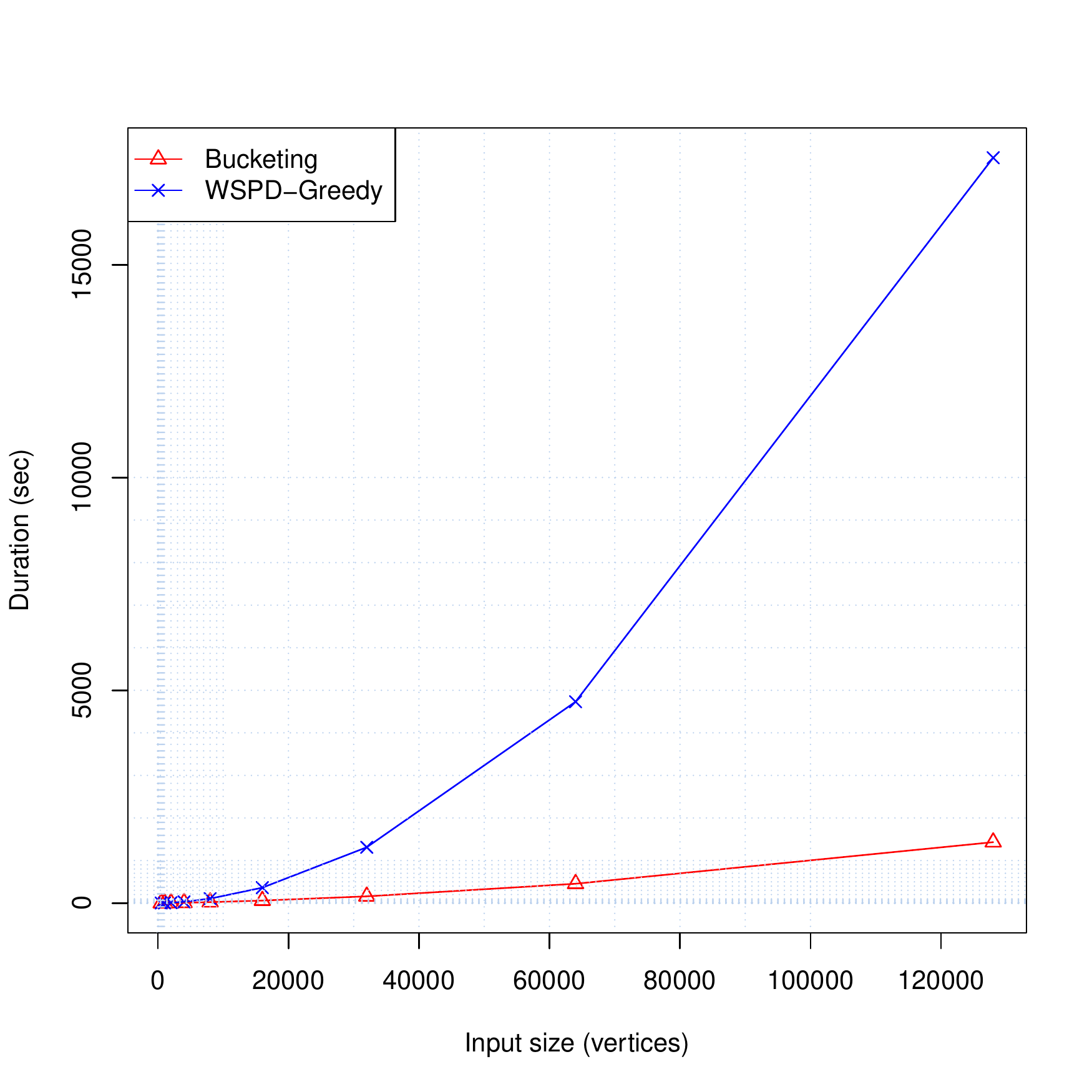}
 \includegraphics[width=8cm]{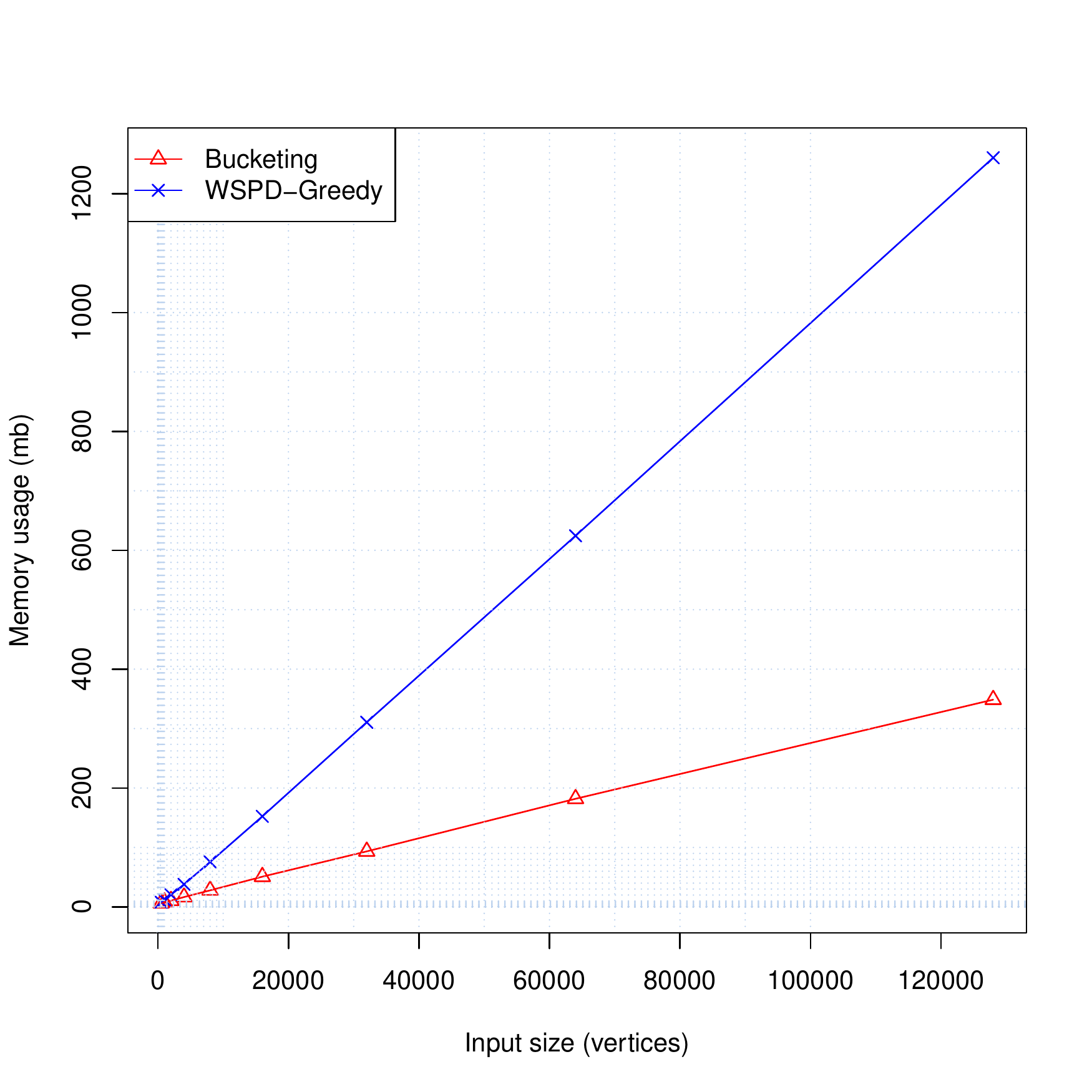}
 \vspace{-.5em}
 \caption{The left plot shows the running time of our algorithm (Bucketing) and WSPD-Greedy for $t=2$ on variously sized point sets generated with a normal distribution. The right plot shows the memory usage on the same data}
 \label{figure:normalplot}
 \end{figure}

 \begin{figure}[h!]\centering
 \includegraphics[width=8cm]{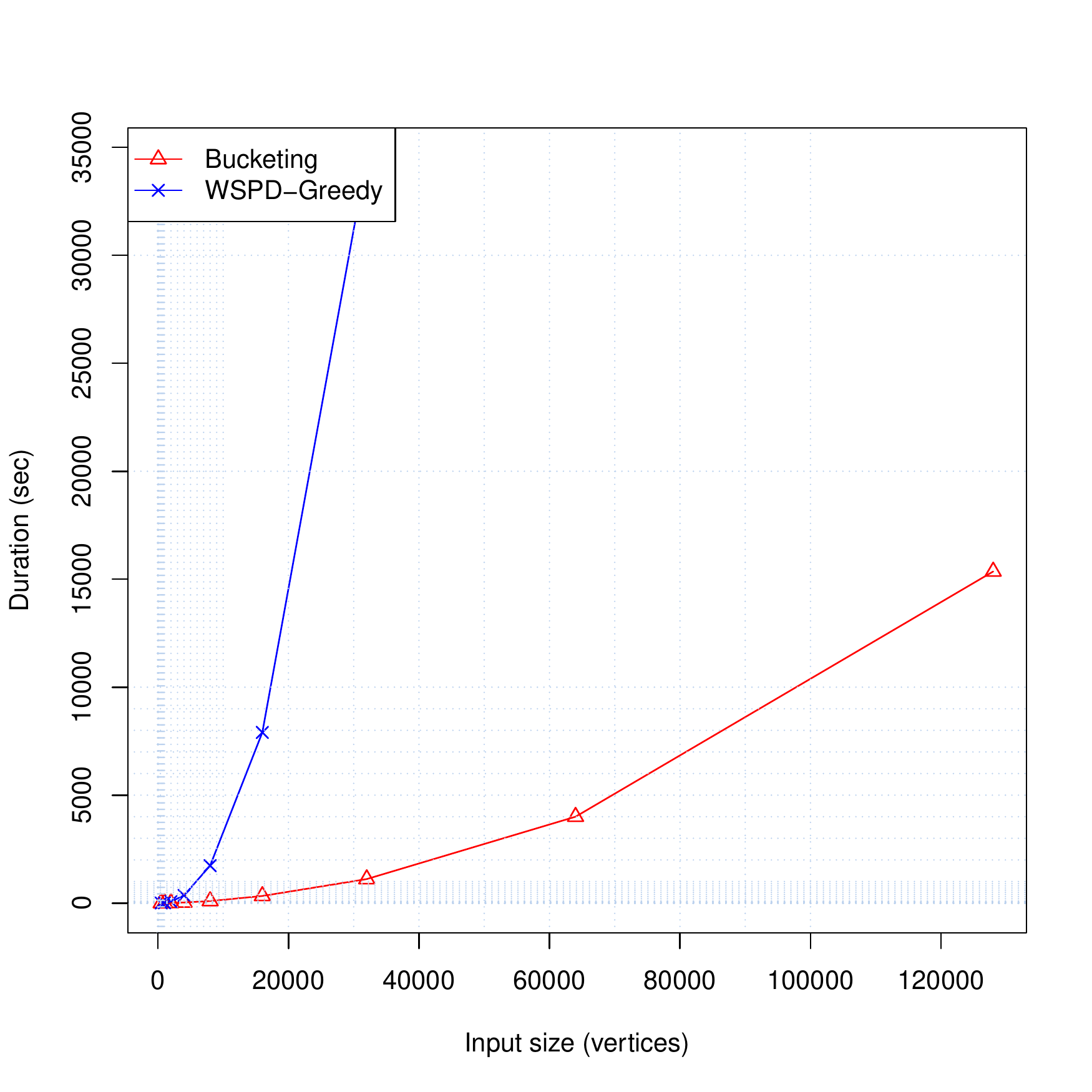}
 \includegraphics[width=8cm]{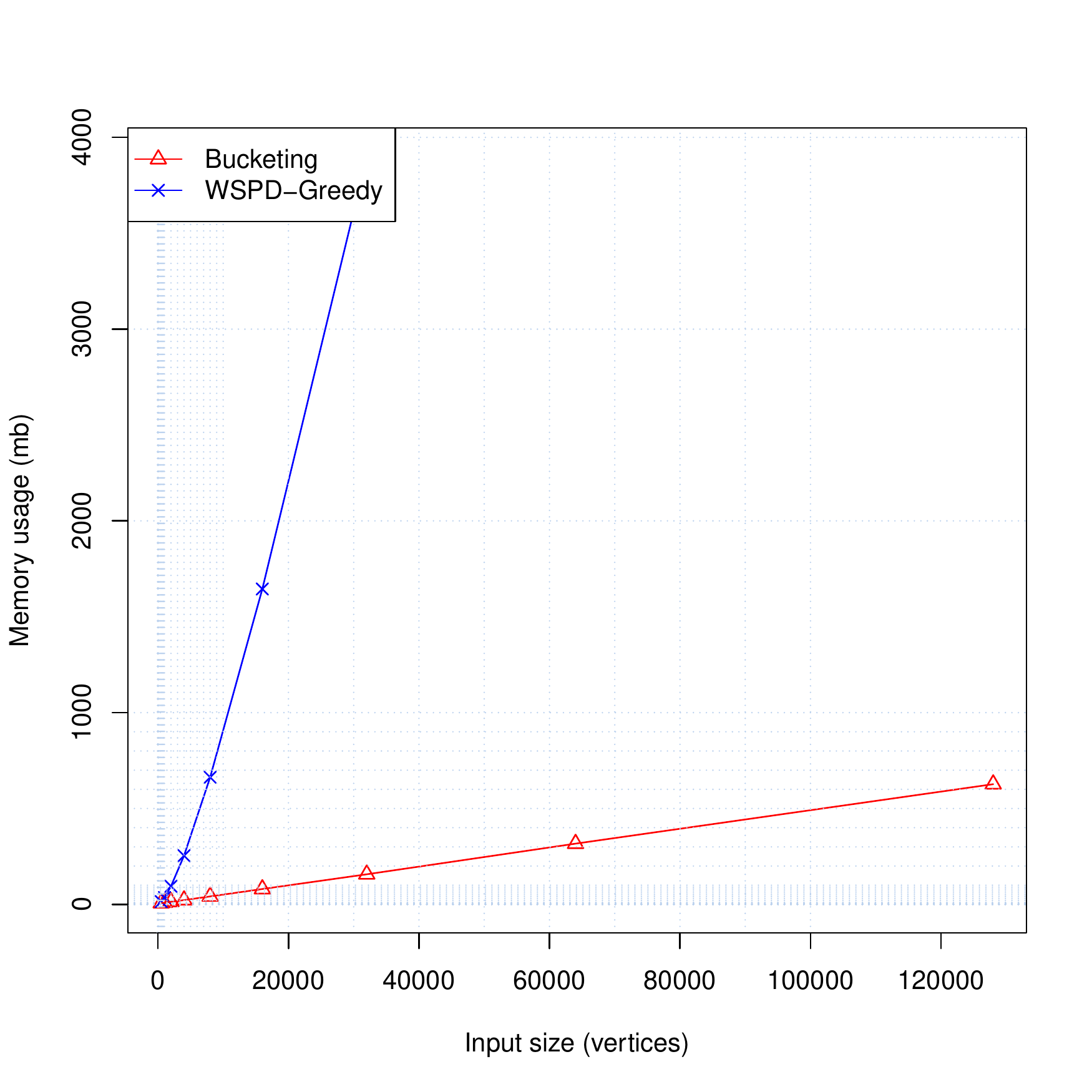}
 \vspace{-.5em}
 \caption{The left plot shows the running time of our algorithm (Bucketing) and WSPD-Greedy for $t=1.1$ variously sized uniformly distributed instances. The right plot shows the memory usage on the same data}
 \label{figure:uniformtplot}
 \end{figure}

 \begin{figure}[h!]\centering
 \includegraphics[width=8cm]{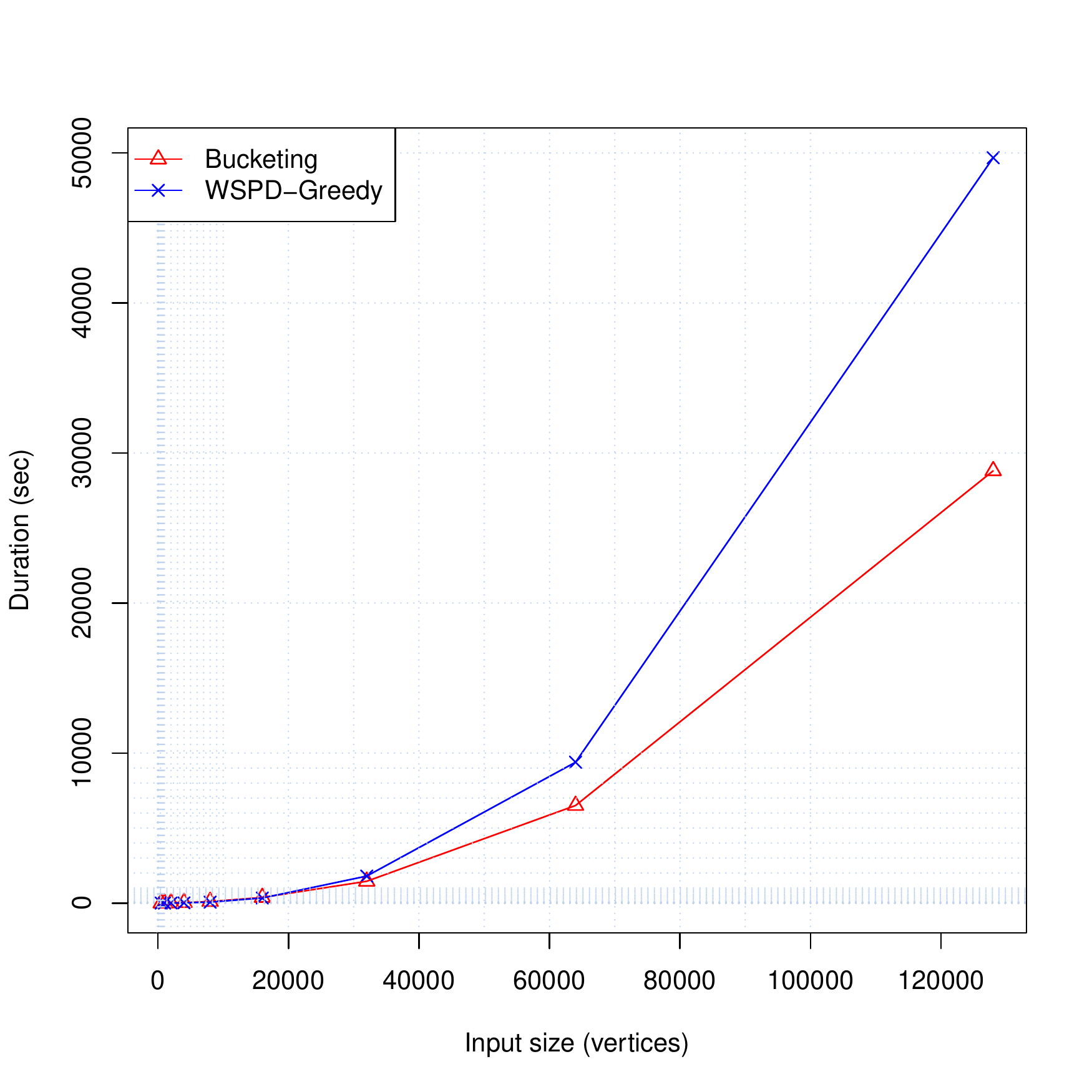}
 \includegraphics[width=8cm]{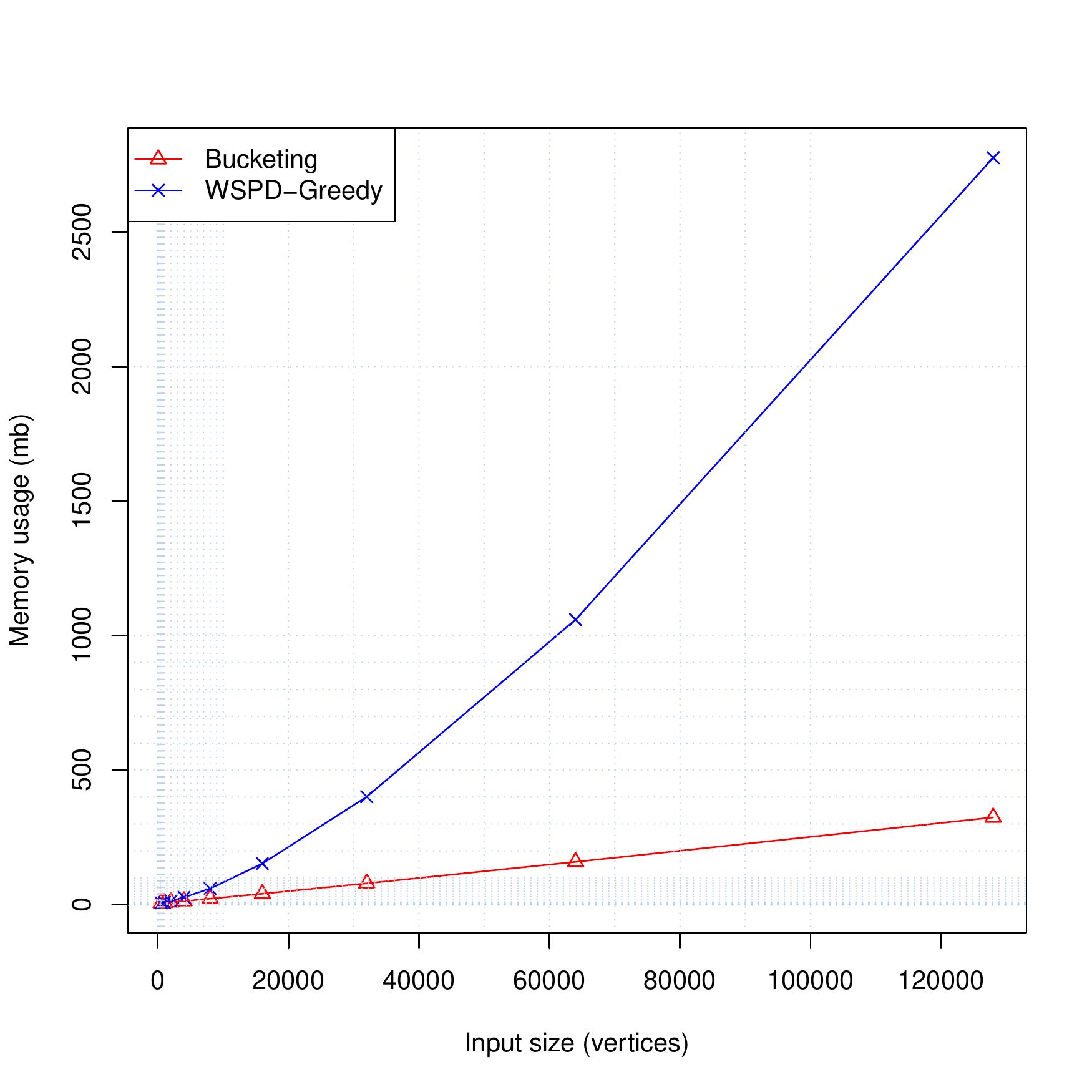}
 \vspace{-.5em}
 \caption{The left plot shows the running time of our algorithm (Bucketing) and WSPD-Greedy for $t=1.1$ variously sized clustered instances. The right plot shows the memory usage on the same data}
 \label{figure:squarestplot}
 \end{figure}

\end{document}